\theoremstyle{plain}
\newtheorem{theorem}{Theorem}
\newtheorem{prop}[theorem]{Proposition}
\newtheorem{cor}[theorem]{Corollary}
\newtheorem{lemma}[theorem]{Lemma}
\theoremstyle{definition}
\theoremstyle{remark}
\newtheorem{example}{Example}
\newcommand{\bqn}{\begin{eqnarray}}
\newcommand{\eqn}{\end{eqnarray}}
\newcommand{\bi}{\begin{itemize}}
\newcommand{\iii}{\item}
\newcommand{\ei}{\end{itemize}}
\newcommand{\ba}{\begin{array}}
\newcommand{\ea}{\end{array}}
\newcommand{\nn}{\nonumber}
\newcommand{\bc}{\begin{cor}}
\newcommand{\ec}{\end{cor}}
\newcommand{\bp}{\begin{prop}}
\newcommand{\ep}{\end{prop}}
\newcommand{\bt}{\begin{theorem}}
\newcommand{\et}{\end{theorem}}
\newcommand{\eps}{\varepsilon}
\newcommand{\al}{\alpha}
\newcommand{\R}{\mathbb{R}}
\newcommand{\C}{\mathbb{C}}
\newcommand{\qq}{q}
\newcommand{\qqin}{\qq_{\mathrm{in}}}
\newcommand{\ppin}{p_{\mathrm{in}}}
\newcommand{\qqfin}{\qq_{\mathrm{fi}}}
\newcommand{\target}{{\cal T}}
\newcommand{\HHH}{{\cal H}}
\begin{document}

\title{Introduction to the Pontryagin Maximum Principle for Quantum Optimal Control}
\author{U. Boscain\footnote{Laboratoire Jacques-Louis Lions, CNRS, Inria, Sorbonne Universit\'e, Universit\'e de Paris, France, boscain@gmail.com}, M. Sigalotti\footnote{Laboratoire Jacques-Louis Lions, CNRS, Inria, Sorbonne Universit\'e, Universit\'e de Paris, France, mario.sigalotti@inria.fr}, D. Sugny\footnote{Laboratoire Interdisciplinaire Carnot de
Bourgogne (ICB), UMR 6303 CNRS-Universit\'e Bourgogne-Franche Comt\'e, 9 Av. A.
Savary, BP 47 870, F-21078 Dijon Cedex, France, dominique.sugny@u-bourgogne.fr}}

\maketitle


\begin{abstract}
Optimal Control Theory is a powerful mathematical tool, which has known a rapid development since the 1950s, mainly for engineering applications. More recently, it has become a widely used method to improve process performance in quantum technologies by means of highly efficient control of quantum dynamics. This tutorial aims at providing an introduction to key concepts of optimal control theory which is accessible to physicists and engineers working in quantum control or in related fields. The different mathematical results are introduced intuitively, before being rigorously stated. This tutorial describes modern aspects of optimal control theory, with a particular focus on the Pontryagin Maximum Principle, which is the main tool for determining open-loop control laws without experimental feedback. The different steps to solve an optimal control problem are discussed, before moving on to more advanced topics such as the existence of optimal solutions or the definition of the different types of extremals, namely normal, abnormal, and singular. The tutorial covers various quantum control issues and describes their mathematical formulation suitable for optimal control. The connection between the Pontryagin Maximum Principle and gradient-based optimization algorithms used for high-dimensional quantum systems is described. The optimal solution of different low-dimensional quantum systems is presented in detail, illustrating how the mathematical tools are applied in a practical way.
\end{abstract}

\maketitle

\tableofcontents

\section{Introduction}

{ Quantum technology aims at developing practical applications based on properties of quantum mechanics~\cite{roadmapQT}. This objective requires precise manipulation of quantum objects by means of external electromagnetic fields. Quantum control encompasses a set of techniques to find the time evolution of control parameters which perform specific tasks in quantum physics~\cite{cat,dalessandro-book,past-present-future,altafini-ticozzi,dong,Koch2016,RMPSTA,RMPstirap,RMPsugny,daems:2013,genov2014,Deffner_2017,Wakamura_2020,brody2015,carlini2006,lloyd2015,Ruschhaupt_2012,lloyd2014}. In recent years, it has naturally become a key tool in the emergent field of quantum technologies~\cite{roadmapQT,wilhelmreview,cat}, with applications ranging from quantum computing~\cite{cat,kosloff2002,geng2016} to quantum sensing~\cite{calarcoreview} and quantum simulation~\cite{quantumsimulation,calarco2011}.

In the majority of quantum control protocols, the control law is computed in an open-loop configuration without experimental feedback. In this context, a powerful tool is Optimal Control Theory (OCT)~\cite{cat} which allows a given process to be carried out, while minimizing a cost (e.g., 
the control time). This approach has key advantages. Its flexibility makes it possible to adapt to experimental constraints or limitations and its optimal character leads to the physical limits of the driven dynamics. OCT can be viewed as a generalization of the classical calculus of variations for problems with dynamical constraints~\cite{liberzon-book}. Its modern version was born with the Pontryagin maximum principle (PMP) in the late 1950s~\cite{pontryaginbook}. Since the pioneering study of Pontryagin and co-workers, OCT has undergone rapid development and is nowadays a recognized field of mathematical research. Recent tools from differential geometry have been applied to control theory, making these methods very effective in dealing with problems of growing complexity. Many reference textbooks have been published on the subject both on mathematical results and engineering applications~\cite{agrachev-book,new-book,bressan-piccoli,schaettler-book,liberzon-book,boscain-book,jurdjevic-book,leemarkusbook,brysonbook,bonnardbook2012}. Originally inspired by problems of space dynamics, OCT was then applied in a wide spectrum of applications such as robotics or economics. OCT was first used for quantum processes~\cite{dahleh1988,KOSLOFF1989} in the context of physical chemistry, the goals being to steer chemical reactions~\cite{ricebook,SOMLOI1993,past-present-future,koch2015,sugny2005} or to control spin dynamics in Nuclear Magnetic Resonance~\cite{conolly,skinner2003application,KHANEJA2005,KOBZAR2012}. A lot of results have recently been established for quantum technologies, as for example the minimum duration to generate high-fidelity quantum gates~\cite{cat}.\\
Two types of approach based on the PMP have been used to solve optimal control problems in low- and high-dimensional systems, respectively. In the first situation called geometric optimal control theory, the equations for optimality are solved by using geometric and analytical tools. The results can be determined 
analytically or at least with a very high numerical precision. The PMP allows to deduce the structure of the optimal solutions and, in some cases, a proof of their global optimality can be established. In this context, a series of low-dimensional quantum control problems has been rigorously solved in recent years for both closed~\cite{alessandro2001,boscain-mason,boscain-nonisotropic,boscain-q1,garon2013,khaneja2001,khaneja2002,hegerfeldt2013,stefanatos2010,chen2016} and open quantum systems~\cite{KhanejaPNAS,khanejathree,lapert2010,bonnard2009,bonnard2012,stefanatos2004,lapert2013,mukherjee2013}. Specific numerical optimization algorithms have been developed and applied to design control fields in larger quantum systems~\cite{KHANEJA2005,koch2012,calarco2011,machnes2011,machnes2018}. Due to the complexity of control landscape, only local optimal solutions are found with this numerical optimal control approach.

However, despite the recent success of quantum optimal control theory, the situation is still not completely satisfactory. The difficulty of the concepts used in this field does not allow a non-expert to understand and apply easily these techniques. The mathematical textbooks use a specialized and sophisticated language, which makes these works difficult to access. Very few basic papers for physicists are available in the literature, while having a minimum grasp of these tools will be an important skill in the future of quantum technologies. The purpose of this tutorial is to  provide an introduction to the core mathematical concepts and tools of OCT in a rigorous but understandable way by physicists and engineers working in quantum control and in related fields. A deep analogy can be carried out between OCT and finding the minima of a real function of several variables. This parallel is used throughout the text to qualitatively describe the key aspects of the PMP. The tutorial is based on an advanced course for PhD students in physics taught at Saarland University in Spring of 2019. It assumes a basic knowledge of standard topics in quantum physics, but also of mathematical techniques such as linear algebra or differential calculus and geometry. Finally, we hope that this paper will give the reader the prerequisites to access a more specialized literature and to apply 
optimal control techniques to their own control problems.

}

\subsection*{Structure of the paper}
{ A tutorial about optimal control is a difficult task because a large number of mathematical results have been obtained and many techniques have been developed over the years for specific applications. Among others, we can distinguish the following problem classes: finite or infinite-dimensional systems, open or closed-loop control, linear or nonlinear dynamical systems, geometric or numerical optimal control, PMP or Hamilton--Jacobi--Bellmann (HJB) approach\dots\ We briefly recall that the HJB method which is the result of the dynamic programming theory leads to necessary and sufficient conditions for optimality in which the optimal cost is solution of a nonlinear partial differential equation~\cite{liberzon-book}. Unfortunately, this equation is generally very difficult to solve numerically.

This means making choices about which topics to include in this paper. We have deliberately selected specific aspects of OCT that are treated rigorously, while others are only briefly mentioned. The choice fell on basic mathematical concepts which are the most useful in quantum control. We limit our focus on the optimal control of open-loop  finite-dimensional system by using the PMP. In particular, we consider only analytical and geometric techniques to solve low-dimensional control problems. To ensure overall consistency and limit the length of the paper, we do not discuss in depth numerical optimization methods and the infinite-dimensional case~\cite{borzi-book,approximate=exact}, which are also key 
issues
in quantum control. In order to connect this tutorial with the current applications of optimal control to high-dimensional quantum system, we describe the link between the PMP and the most current implementation 
of the gradient-based optimization algorithm (the GRAPE algorithm~\cite{KHANEJA2005}). Finally, we stress that  a precise knowledge of the PMP is an essential skill for numerical optimization, and that  the scope of the material of this paper is much broader than the examples presented.

The paper is built on three reading levels. A first level corresponds to the main text and explains the main concepts necessary to describe and apply the theory of optimal control. Some key ideas in optimal control are first introduced qualitatively for a simple quantum system in Sec.~\ref{introqual}. In addition to the two examples solved in Sec.~\ref{secthree} and \ref{sectwo}, the different notions are described rigorously and then systematically illustrated by examples.
This establishes a direct link  between the mathematical concept and its practical application.
  A second reading level is given by footnotes, which recall a mathematical definition or correspond to a more specific comment which can be skipped on a first reading. A final reading level is available in the appendices. These different paragraphs explain in detail the mathematical origin of the theorems used for the controllability and the existence problem and some standard counter-examples or specificities the reader should have in mind. We point out that these sections are not mathematical proofs of theorems, but rather a description of the formalism introduced in a language accessible to a physicist. In order to facilitate the reading of the paper, a list of the main notations used is given in Sec.~\ref{notations} with the place of the text where they are first introduced.

Although the paper is thought for a physics audience and the mathematical details are kept as simple as possible, our objective is to stick to rigorous statements and claims, since 
 this aspect becomes crucial while implementing optimal control ideas in numerical simulations or in experiments.

The paper is organized as follows. We first introduce the main ideas used in optimal control in the case of a simple quantum system in Sec.~\ref{introqual}. We then show how to formulate an optimal control problem from a mathematical point of view in Sec.~\ref{secformu}. Closed and open quantum systems illustrate this discussion. The different steps to solve such a problem are presented in Sec.~\ref{secsteps} by using the analogy with finding a minimum of a function of several variables. The tutorial continues with a point which is crucial, but often overlooked in quantum control studies, namely the existence of optimal solutions. We present in Sec.~\ref{s-existence} some results based on the Filippov test, which is one of the most important techniques to address this question.
The first-order conditions are described in Sec.~\ref{s-first}, with a specific attention on the different types of extremals and on the statement of the PMP. The connection between the PMP and gradient-based optimization algorithms is described in Sec.~\ref{sectiongradient}. Sections~\ref{secthree} and \ref{sectwo} are dedicated to the presentation of two examples in three and two-level quantum systems, respectively. Recent advances in the application of OCT to quantum technologies are briefly described in Sec.~\ref{applioct}, where we mention some of the current directions that are being followed for the development of these techniques. A conclusion is given in Sec.~\ref{conclu}. Mathematical details about the controllability and existence problems are postponed, respectively, to Appendices~\ref{testcontrollability} and \ref{filippovtheorem}.

\section{Introduction to the optimal control concepts: The case of a two-level quantum system}\label{introqual}
In quantum control, a general problem is to prepare a given quantum state by means of a specific time-dependent electromagnetic pulse. This leads to some questions such as
which states can be achieved or
which
shape of control is required to realize this objective. These aspects, which are addressed rigorously in the rest of the tutorial, are first introduced qualitatively in this section.

To this aim, we consider the control
driving a two-level quantum system
from the ground to the excited state. The system is described by a wave function $\psi(t)$ whose dynamics are governed by the Schr\"odinger equation
$$
i\dot{\psi}=
\begin{pmatrix} E_0 & \Omega(t)\\
\Omega^*(t) & E_1\end{pmatrix}\psi,
$$
where units such that $\hbar=1$ have been chosen. The parameters $E_0$ and $E_1$ denote respectively the energies of the ground and excited states, while $\Omega(t)$ corresponds (up to a multiplicative factor) to a complex external field whose real and imaginary parts are, e.g., the components of two orthogonal linearly polarized laser fields. We consider resonant fields for which the carrier frequency $\omega$ of the laser is equal to the energy difference $E_1-E_0$, namely,
\[\Omega(t)=u(t)e^{i(E_1-E_0)t},\]
where the amplitude $u(t)$ represents the control and is assumed to be real.
We now apply a time-dependent change of variables corresponding to the choice of a rotating frame.
The time evolution of $\tilde{\psi}=\Upsilon^{-1}\psi$, with $\Upsilon=\textrm{diag}(e^{-iE_0t},e^{-iE_1t})$, satisfies the differential equation
$$
i\dot{\tilde{\psi}}=
\begin{pmatrix} 0 & u(t)\\
u(t) & 0\end{pmatrix}\tilde{\psi}.
$$
We denote by $c_1=x_1+iy_1$ and $c_2=x_2+iy_2$ the two complex coordinates of $\tilde{\psi}$ in a basis of the Hilbert space $\mathbb{C}^2$ where the indices 1 and 2 correspond respectively to the ground and excited states. Since $\psi$ is a state of norm 1 and $\Upsilon$ a unitary operator, we deduce that $x_1^2+y_1^2+x_2^2+y_2^2=1$. Starting from the state $x_1=1$, the goal of the control is to bring the system to a target for which $x_2^2+y_2^2=1$. The Schr\"odinger equation is equivalent to the following set of equations for the coefficients $x_k$ and $y_k$:
$$
\begin{cases}
\dot{x}_1=u y_2 \\
\dot{y}_1=-u x_2 \\
\dot{x}_2=u y_1 \\
\dot{y}_2=-u x_1.
\end{cases}
$$
Since $u$ is a real control, we immediately see that the first and the last equations are coupled to each other and decoupled from the two others. In other words, the initial state of the dynamics is only connected to states for which $y_1=x_2=0$, i.e., such that $x_1^2+y_2^2=1$. The system thus evolves on a circle. For our control objective, the only interesting states correspond  therefore to $y_2=\pm 1$. It is also straightforward to verify that such target states can be reached at least with a constant control $u$. In control theory, this formulation of the control problem and the analysis of the reachable set from the initial state constitute a basic prerequisite before deriving a specific control procedure. This step is detailed in Sec.~\ref{secformu}.

We now explore the optimal control of this system. We first use the circular geometry of the dynamics to simplify the corresponding equations. We introduce the angle $\theta$ such that $x_1=\cos\theta$ and $y_2=\sin\theta$, with $\theta(0)=1$. We arrive at
$$
\dot{\theta}=-u(t), 
$$
where the target state is here defined as $\theta_{\textrm{fi}}=\pm\frac{\pi}{2}$.
By symmetry, we can fix without loss of generality $\theta_{\textrm{fi}}=-\frac{\pi}{2}$.
Many control solutions $u$ exist to reach this state and a specific protocol can be selected by minimizing at the same time a functional of the state of the system and of the control, called a cost. Here, an example is given by the control time. To summarize in this example, the goal of the optimal control procedure is then to find the control $u$ steering the system to the target state in minimum time. Consider first constant controls $u(t)=u_0$ with $u_0\in\mathbb{R}$. The duration of the process is thus $T=\frac{\pi}{2u_0}$. This solution reveals a key problem in optimal control which corresponds to the existence of a minimum. In this example, arbitrarily fast controls can be achieved by considering larger and larger amplitudes $u_0$ and an optimal trajectory minimizing the transfer time does not exist. The analysis of the existence of optimal solutions which is a building block of any rigorous description of an optimal control problem is discussed in Sec.~\ref{s-existence}. It can be shown with the results presented in Sec.~\ref{s-existence} that an optimal solution exists if the set of available controls is restricted to a bounded interval, e.g., $u(t)\in [-u_m,u_m]$ where $u_m$ is the maximum amplitude. In this case, the optimal pulse is the control of maximum amplitude, the minimum time being equal to $\pi/{2u_m}$.

In order to illustrate the method of solving an optimal control problem, we consider the same transfer but in a fixed time $T$, the goal being to minimize the energy associated with the control, i.e., the term $\int_0^{T}\frac{u(t)^2}{2}dt$. There is no additional constraint on the control and we have $u(t)\in\mathbb{R}$. The target state is reached if $\int_0^{T}u(t)dt
=\frac{\pi}{2}$. Introducing the Lagrange multiplier $\lambda\in \mathbb{R}$, this constrained optimization problem can be transformed into the minimization of the functional
$$
J=\int_0^{T}\left(\frac{u(t)^2}{2}+\lambda u(t)
\right)dt-\lambda \frac{\pi}{2}.
$$
If we denote by $H$ the function $H=-\frac{u^2}{2}-\lambda u
$, the Euler-Lagrange principle leads to $\frac{\partial H}{\partial u}=0$, i.e., $u(t)=-\lambda
$. Using the constraint on the dynamics, we finally arrive at the optimal control $u(t)=\frac{\pi}{2T}$.

In this simple example, the optimal solution can be derived without the complete machinery of the Pontryagin Maximum Principle presented below. However, in the example we have introduced the main tools used in the PMP,
such as the Lagrange multiplier, the Pontryagin Hamiltonian $H$, and the maximization condition $\frac{\partial H}{\partial u}=0$. A few comments are in order here. The dynamical constraint is quite simple since the dimension of the state space is the same as the number of controls, the dynamics can be exactly integrated and the set of controls satisfying the constraint $\int_0^T u(t)dt=\frac\pi2$ is regular. This is not the case for a general nonlinear control system for  which (1) the Lagrange multiplier (which usually is  not constant but a function of time) is not easily found, (2) abnormal Lagrange multipliers appear if the set of controls satisfying the constraint is not regular.
We observe that $H$ can be rewritten as 
$$
H=\lambda\dot{\theta}-\frac{u^2}{2},
$$
which corresponds to the general formulation of the Pontryagin Hamiltonian in the normal case. The maximization condition remains the same in a general setting if there is no constraint on the available control. These aspects are discussed in details in Sec.~\ref{s-first}.

\section{Formulation of { the control problem}}\label{secformu}

\noindent{\bf The dynamics}.
 A finite-dimensional control system is a dynamical system governed
by an equation of the  form
\begin{equation}
\dot\qq(t)= f(\qq(t),u(t)),
\label{control}
\end{equation}
where
 $\qq:I\to M$ represents the state of the system, $I$ is an interval in $\R$ and $M$ is a smooth manifold whose dimension is denoted by $n$~\cite{comp1}. We recall that a manifold is a space that locally (but possibly not globally) looks like $\R^n$. Manifolds appear naturally in quantum control to describe, for instance, the $(2N-1)$-dimensional sphere $S^{2N-1}$, which is the set of wave functions of a $N$-level quantum system. The control law is $u:I\to U\subset\R^m$ and $f$ is a smooth function
such that $f(\cdot,\bar u)$ is a vector field on $M$ for every $\bar u\in U$.
An example of set $U$ of possible values of $u(t)$ is given by $U=[-1,1]^m$, meaning that the size of each of the coordinates of $u$ is at most one.
The set $U$ can  be the entire $\R^m$ if there is no control constraint.

To be sure that Eq.~\eqref{control} is well-posed from a mathematical viewpoint, we consider the case in which $I=[0,T]$ for some $T>0$ and
$u$ belongs to a space of regular enough functions $\mathcal{U}$ called the class of {\em admissible controls} (see \cite{comp3} for a precise definition).
Piecewise continuous controls
form a subset of admissible controls, and in experimental implementations in quantum control they are the only control laws that can be  reasonably applied.  However,
the class of piecewise constant controls is not suited to prove existence of optimal controls \cite{comp4}.

%

Given an admissible control $u(\cdot)$ and an initial condition $\qq(0)=\qqin\in M$, there exists a unique solution $\qq(\cdot)$ of Eq.~\eqref{control}, defined at least for small times~\cite{comp2}. A continuous curve $\qq(\cdot)$ for which there exists an admissible control  $u(\cdot)$ such that Eq.~\eqref{control} is satisfied is said to be an  {\em admissible trajectory}.

Let us present some
typical situations encountered in quantum control.  

Consider the time evolution of the wave function of a closed $N$-level quantum system. In this case, under the dipolar approximation~\cite{RMPsugny,lapert2008,ohtsuki2008}, the dynamics are governed by the Schr\"odinger equation (in units where $\hbar=1$)
\begin{equation}\label{eqwave}
i\dot\psi(t)=\left(H_0+\sum_{j=1}^m u_j(t)H_j\right)\psi(t),
\end{equation}
where $\psi$, the \emph{wave function}, belongs to the unit sphere in $\C^N$ and $H_0,\ldots,H_m$ are $N\times N$ Hermitian matrices.  The control parameters  $u_j(t)\in\R$ are the components of the control $u(\cdot)$. This control problem has the form \eqref{control} with $n=2N-1$,
$M=S^{2N-1}\subset\C^N$, $\qq=\psi$,
and $f(\psi,u)=-i (H_0+\sum_j^m u_j H_j)\psi$. Note that the uncontrolled part corresponding to the $H_0$- term is called the drift.  The solution of the Schr\"odinger equation can also be expressed in terms of the unitary operator ${\bf U}(t,t_0)$,
which connects the wave function at time $t_0$ to its value at $t$: $\psi(t)={\bf U}(t,t_0)\psi(t_0)$. The \emph{propagator} ${\bf U}(t,t_0)$ also satisfies the Schr\"odinger equation
\begin{equation}\label{eqprop}
i\dot {\bf U}(t,t_0)=\left(H_0+\sum_{j=1}^m u_j(t)H_j\right){\bf U}(t,t_0),
\end{equation}
with  initial condition ${\bf U}(t_0,t_0)=\mathbb{I}_N$. In quantum computing, the control problem is generally defined with respect to the propagator ${\bf U}$. Equation~\eqref{eqprop} has the form \eqref{control} with $M=U(N)\subset \C^{N\times N}$ and $\qq={\bf U}$.

The wave function formalism is well adapted to describe pure states of isolated quantum systems, but when one lacks information about the system the correct
formalism is the one of mixed-state quantum systems. The state of the system is then described by a density operator $\rho$, which is a $N\times N$ positive semi-definite Hermitian matrix of unit trace. For a closed quantum system, the density operator is a solution  of the von Neumann equation
$$
\dot \rho(t)=-i[H,\rho(t)],
$$
with $H=H_0+\sum_{j=1}^m u_j(t)H_j$. For an open $N$-level quantum system interacting with its environment, the dynamics of $\rho$ are governed in some cases~\cite{giorgi,breuerbook} by the following first-order differential equation, called the Kossakowski--Lindblad equation~\cite{gorini,lindblad}:
\begin{equation}\label{dissipative}
\dot\rho(t)=-i[H,\rho(t)]+\mathcal{L}[\rho(t)].
\end{equation}

This equation differs from the von Neumann one in that a dissipation operator $\mathcal{L}$ acting on the set of density operators has been added. This
linear operator which describes the interaction with the environment cannot be chosen arbitrarily. Its expression can be derived from physical arguments based on a Markovian regime and a small coupling with the environment~\cite{alickibook,breuerbook}. From a mathematical point of view, the problem of finding dynamical generators for open systems that ensure complete positivity of the dynamical evolution was solved in finite- and infinite-dimensional Hilbert spaces~\cite{gorini,lindblad}.  The operator $\mathcal{L}$ is a linear operator acting on the space of density matrices that can be expressed for a $N$-level quantum system as
\begin{align*}
\mathcal{L}[\rho(t)]&=\frac{1}{2}
\sum_{k,k'=1}^{N^2-1}a_{kk'}([V_k\rho(t),V_{k'}^\dagger]+[V_k,\rho(t)V_{k'}^\dagger]),
\end{align*}
where the matrices $V_1,\dots,V_{N^2-1}$ are trace-zero and orthonormal. The linear mapping $\mathcal{L}$ is completely positive if and only if the matrix $a=(a_{kk'})_{k,k'=1}^{N^2-1}$ is positive~\cite{schirmer2004,Schirmer_2004}.
The density operator $\rho$ can be represented as a vector $\vec{\rho}$ by stacking its columns. The corresponding time evolution is generated by superoperators in the Schr\"odinger-like form
\begin{equation}\label{eqliouville}
i\dot{ \vec{\rho}}=\textbf{H}\vec{\rho}.
\end{equation}
Equation~\eqref{eqliouville} has the form \eqref{control} with $M=\mathbb{B}^{N^2-1}\subset \mathbb{R}^{N^2-1}$, $\qq=\vec{\rho}$, and $f:\vec{\rho}\mapsto \textbf{H}\vec{\rho}$. Here $\mathbb{B}^{N^2-1}$ denotes the ball of radius 1 in $\mathbb{R}^{N^2-1}$~\cite{comp5}.

\noindent{\bf The initial and final states}.
{When considering a quantum control problem, the goal in most situations is not to bring the system from an initial state $\qqin$ to a final state $\qqfin$, but rather to reach at time $T$ a smooth \emph{submanifold} $\target$ of $M$ (see \cite{submanifold} for a precise definition), called {\em target}}:
\begin{equation}
\qq(0)=\qqin,\qquad \qq(T)\in\target.
\label{infin}
\end{equation}
{This issue arises, for instance, in the population transfer from a state $\psi_{\textrm{in}}$
 to an eigenstate $\psi_{\textrm{fi}}$ of the field-free Hamiltonian $H_0$. In this case, since the phase of the final state is not physically relevant, $\target$ is characterized by  $\{e^{i\theta} \psi_{\textrm{fi}}\mid \theta\in[0,2\pi]\}$.
It can also happen that the initial condition $\qq(0)=\qqin$ is generalized to $\qq(0)\in{\cal S}$, where ${\cal S}$ is a smooth submanifold of $M$. However, for the sake of presentation, we will not treat this case here, the changes to be made to the method being straightforward. Finally, note that the time $T$ can be fixed or free, as,  for instance, in a time-minimum control problem.}\\
\begin{figure}
\begin{center}
\begin{flushleft}
\includegraphics[scale=0.6]{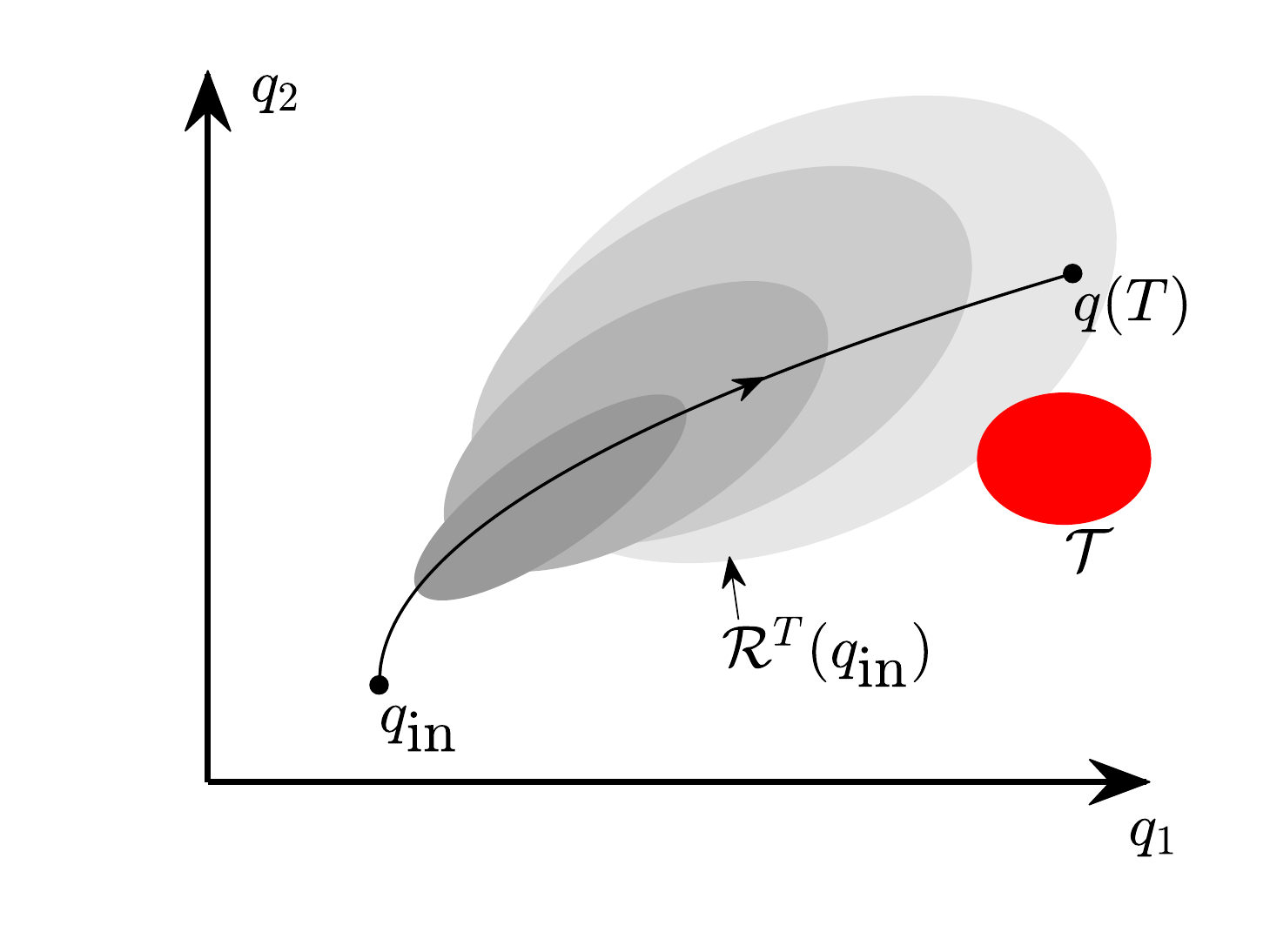}
\end{flushleft}
\end{center}
\caption{The reachable set $\mathcal{R}^T(q_{\textrm{in}})$ from $q_{\textrm{in}}$ at different times $T$ (the shades of gray indicate an increase of the control time) and the target $\mathcal{T}$ (red area). The manifold $M$ is $\mathbb{R}^2$ and the coordinates of $q$ are $(q_1,q_2)$. The intersection between the two sets $\mathcal{R}^T(q_{\textrm{in}})$ and $\mathcal{T}$ is non-empty for a long enough time $T$. Note that the initial point of the dynamics does not belong to the reachable set at time $T$. This is due to a specific choice of the dynamical system with a drift.\label{f0}}
\end{figure}

\noindent {\bf The optimal control problem}.   Two different optimal control approaches can be used to steer the system from $\qqin$ to a target $\target$.\\
$\bullet$ Approach A: Prove that the target $\target$ is reachable from $\qqin$ (in time $T$ if the final time is fixed or in any time otherwise) and then find the best possible control realizing the transfer. This approach requires to solve the preliminary step of controllability. Essentially, we need to show that:
\begin{align}
& \target\cap {\cal R}(\qqin)\not=\emptyset \mbox{ if $T$ is free where}\nn\\
&{\cal R}(\qqin):=\{\bar\qq\in M\mid{} \exists~T \mbox{ and}\nn\\
& \mbox{an admissible trajectory } q:[0,T]\to M \nn\\
&\mbox{ such that }\qq(0)=\qqin, ~\qq(T)=\bar\qq\}\nn
\end{align}
or that
\begin{align}
& \target\cap {\cal R}^T(\qqin)\not=\emptyset \mbox{ if $T$ is fixed where}\nn\\
& {\cal R}^T(\qqin):=\{\bar\qq\in M\mid{} \exists\mbox{ an admissible trajectory }\nn\\
&q:[0,T]\to M \mbox{ such that }\qq(0)=\qqin, ~\qq(T)=\bar\qq\},\nn
\end{align}
 and then solve the minimization problem
\begin{equation}\label{pbA}
\int_0^{T} f^0(\qq(t),u(t))\,dt \longrightarrow \min,
\end{equation}
where $f^0:M\times U\to \R$ is a smooth function,
 which in many quantum control applications  depends only on the control. An example is given by the functional $\int_0^{T}
 u^2(t)
 dt$ which represents the energy used in the control process. The control time $T$ is fixed or free. This integral is generally called the \emph{cost functional}. A schematic illustration of the reachable set $\mathcal{R}^T(\qqin)$ and the target $\mathcal{T}$ is given in Fig.~\ref{f0}.

The test of controllability is sometimes easy (as, for instance, for low-dimen\-sional closed quantum systems~\cite{albertini-dalessandro,schirmer:2001}) and sometimes 
more difficult. We recall that a closed quantum system is controllable if the matrix Lie algebra generated by the matrices  $H_0,\dots,H_m$
is $\mathrm{su}(N)$.
For general systems, a
useful
sufficient condition for controllability is described in Appendix~\ref{testcontrollability}. When the test of controllability can be performed, this approach is to be preferred since it permits to reach exactly the final state.

\noindent $\bullet$  Approach B: Find a control that brings the system as close as possible to the target, while minimizing the cost. This approach is used for systems for which the controllability step cannot be easily verified. In this case, the initial point is fixed and the final point is free, but the cost contains a term (denoted $d(\cdot,\cdot)$ in the next formula) depending on the distance between the final state of the dynamics and the target:
\begin{equation}\label{pbB}
\int_0^{T} f^0(\qq(t),u(t))\,dt+d(\target,\qq(T))\rightarrow \min,
\end{equation}
where $T$ is fixed or free.

\begin{example}
An example is given by open quantum systems governed by the Kossakowski--Lindblad equation, for which the characterization of the reachable set is quite involved~\cite{altafini2003,dirr2009}. If we denote by $\rho_{\textrm{fi}}$ the target state, a cost functional to minimize penalizing the energy of the control and the distance to the target can be
$$
 \int_0^T\sum_{j=1}^m\frac{u_j(t)^2}{2}dt+\|\rho(T)-\rho_{\textrm{fi}}\|^2,
$$
where $\|\cdot\|$ is the norm  corresponding to the scalar product of density matrices $\langle \rho_1|\rho_2\rangle =\textrm{Tr}[\rho_1^\dagger\rho_2]$.
\end{example}

Optimization problem in these two approaches should be of course considered together with the dynamics~\eqref{control} and the initial and final conditions.
They are summarized in Tab.~\ref{tabsumm}.

\begin{table}[tb]
\caption{Summary of the different optimal control approaches.\label{tabsumm}}
\begin{tabular}{|c|c|}
\hline
{\bf Approach A}&$\dot\qq(t)=f(\qq(t),u(t))$\\
When controllability can be verified, i.e., one can prove that:&$\qq(0)=\qqin,~~\qq(T)\in\target$\\
 $\target\cap{\cal R}(\qqin)\not=\emptyset$ if $T$ is free or&$\int_0^T f^0(\qq(t),u(t))\,dt\to \min$\\
$\target\cap{\cal R}^T(\qqin)\not=\emptyset$ if $T$ is fixed &$T$ fixed or free\\
\hline\hline
{\bf Approach B}&$\dot\qq(t)=f(\qq(t),u(t))$\\
When controllability cannot be verified&$\qq(0)=\qqin,~~\qq(T)$ free\\
&$\int_0^T f^0(\qq(t),u(t))\,dt+d(\target,\qq(T))\to\min$\\
&$T$ fixed or free\\
\hline
\end{tabular}
\end{table}

\section{The different steps to solve an optimal control problem}\label{secsteps}
The steps to determine a solution to the minimization problems~\eqref{pbA} and \eqref{pbB} are similar to finding the minimum of a smooth function $f^0:\R\to\R$.

\begin{itemize}
\item[0.] \textbf{Find conditions which guarantee the existence of solutions.} We recall that among smooth functions $f^0:\R\to\R$, it is easy to find examples not admitting a minimum (e.g., the function $x\mapsto e^{-x}$ and the function $x\mapsto x$ do not have minima).  This step is crucial. If it is skipped, first-order conditions may give a wrong candidate for optimality (see below for details) and numerical optimization schemes may either not converge or converge towards a solution which is not a minimum.  For optimal control problems, there exist several existence tests, but they are not always applicable or easy to use. In Sec.~\ref{s-existence}, we present the Filippov test.

\item[1.] \textbf{Apply first-order necessary conditions.} For a smooth function $f^0:\R\to\R$, this means  that $\textrm{if}~\bar x~\textrm{is a minimum then}~\frac{d}{dx}f^0(\bar x)=0.$ This condition gives candidates for minima, i.e., identifies local minima, local maxima, and saddles. Note that if one does not verify a priori existence of minima, first-order conditions could give wrong candidates. Think for instance to the function $x\mapsto (x^2+1/2)e^{-x^2}$. This function has a single local minimum, obtained at  $x=0$, whose value is $1/2$, which is well identified by first-order conditions. However its infimum is zero (for $x\to\pm\infty$, the function tends to zero). For optimal control problems, first-order necessary conditions should be given in an infinite-dimensional space (a space of curves) and they are expressed by the PMP, which is presented in Sec.~\ref{ss-PMP}. In Approach A, note that the condition that the system reaches exactly the target is a {\em constraint} leading to the appearance of {\em Lagrange multipliers (normal and abnormal)}. This point is discussed in details in Sec.~\ref{ss-lagrange}.

\item[2.] \textbf{Apply second-order conditions.}  For instance, for a smooth function $f^0:\R\to\R$, among the points at which we have $\frac{d}{dx}f^0(\bar x)=0$, a necessary condition to have a minimum is $\frac{d^2}{dx^2}f^0(\bar x)\geq0$. This step is generally used to reduce further the candidates for optimality. For optimal control problems, there are several second-order conditions, such as higher-order Pontryagin Maximum Principles or Legendre--Clebsch conditions (see for instance \cite{agrachev-book,schaettler-book,boscain-book}). In some cases, this step is difficult and it could be more convenient to go directly to the next one.

\item[3.]  \textbf{Selection of the best solution among all candidates.} Among the set of candidates for optimality identified in step 1 and (possibly) further reduced in step 2, one should select the best one. This step is often done by hand if the previous steps have identified a finite number of candidates for optimality.  For optimal control problems, one often ends up with infinitely many candidates for optimality and this step is generally very difficult.

\end{itemize}
 There are of course specific examples for which the solution is particularly simple. This is the case of convex problems, for which
 only first-order conditions should be applied, since the existence step is automatic and first-order conditions are both necessary and sufficient for optimality.  This situation is however rare in quantum control
and we will not discuss it further.

\section{Existence of solutions for Optimal Control Problem: the Filippov test}\label{s-existence}

The existence theory for optimal control is difficult and, unfortunately, there is no general procedure that can be applied in any situation. In this section, we present the most important technique, the Filippov test that allows to tackle several types of problems. In order to keep this paragraph as accessible as possible, we present below only the main ideas and some propositions derived from the Filippov test. These results can be directly applied to quantum systems. A complete statement of the Filippov test is provided in Appendix~\ref{filippovtheorem}. We emphasize that it is fundamental to verify the existence of optimal controls before applying first-order conditions (i.e., the PMP). Otherwise, as discussed in the finite-dimensional case, it may occur that the PMP has solutions, but none of them is optimal.

Let us consider the problem in Approach A with $T$ fixed.

\noindent {\bf Problem P1}
\begin{align}
&\dot\qq(t)=f(\qq(t),u(t)),\nn\\
&\qq(0)=\qqin,~~\qq(T)\in\target,\nn\\
&\int_0^T f^0(\qq(t),u(t))\,dt\to \min,\nn\\
&T>0~\mbox{\rm fixed}.\nn
\end{align}
{\em Here $q:[0,T]\to M$, where $M$ is a smooth $n$-dimensional manifold, $f,f^0$ are smooth functions of their arguments, $u\in \mathcal{U}$, $U\subset\R^m$, and $\target$ is a smooth submanifold of $M$.
}

In order to tackle the existence problem, we define a new variable $\qq^0$ obtained as the value of the cost during the time-evolution, that is,
$$\qq^0(t)=\int_0^t f^0(\qq(s),u(s))\,ds,$$
and we denote  $\hat \qq=(\qq^0,\qq)$. The dynamics of the new state $\hat q$ in $\R\times M$ are given by
\begin{align}
\dot{\hat \qq}(t)&=\left(\ba{c}\dot \qq^0(t)\\\dot \qq(t)\ea \right)=\left(\ba{c}f^0(\qq(t),u(t))\\ f(\qq(t),u(t))\ea \right)\nn \\
&=: \hat f(\qq(t),u(t)),\nonumber\\
&\hat{\qq}(0)=(0,\qqin),~~~ \hat{\qq}(T)\in {\bf \R}\times\target. \nonumber
\end{align}
This control system is called the {\it augmented system}. The minimization problem in integral form, $\min\int_0^Tf^0(q(t),u(t))\,dt$, becomes a problem of minimization of one of the coordinates at the final time, i.e., $\min\qq^0(T)$.

We denote by $\hat{\cal R}^T(0,\qqin)$ the reachable set at time $T$ starting from  $(0,\qqin)$ for the augmented system. The key observation on which optimal control is based is expressed by the  following proposition.
\begin{prop}
If $q(\cdot)$ is an optimal trajectory for problem {\bf (P1)}, then $\hat q(T)\in\partial \hat{\cal R}^T(0,\qqin).$
\end{prop}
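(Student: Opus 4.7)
The plan is to argue by contradiction: assume that $\hat{q}(T)$ lies in the interior of the augmented reachable set $\hat{\mathcal{R}}^T(0,q_{\mathrm{in}})\subset\mathbb{R}\times M$, and then construct a competing admissible trajectory that still steers the original system from $q_{\mathrm{in}}$ into the target $\mathcal{T}$ at time $T$, but with strictly smaller cost. This will contradict the assumed optimality of $q(\cdot)$ and force $\hat{q}(T)$ to lie on the boundary.

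Here is the construction in more detail. If $\hat{q}(T)=(q^0(T),q(T))$ is an interior point of $\hat{\mathcal{R}}^T(0,q_{\mathrm{in}})$, there exists an open neighborhood $V$ of $\hat{q}(T)$ in $\mathbb{R}\times M$ entirely contained in $\hat{\mathcal{R}}^T(0,q_{\mathrm{in}})$. Because translations in the first coordinate are an open motion in $\mathbb{R}\times M$, for every sufficiently small $\varepsilon>0$ the point $(q^0(T)-\varepsilon,q(T))$ still belongs to $V$, and hence to $\hat{\mathcal{R}}^T(0,q_{\mathrm{in}})$. By definition of the reachable set of the augmented system, there exist an admissible control $\tilde{u}\in\mathcal{U}$ and the associated augmented trajectory $\tilde{\hat{q}}(\cdot)=(\tilde{q}^0(\cdot),\tilde{q}(\cdot))$ starting at $(0,q_{\mathrm{in}})$ and satisfying $\tilde{q}^0(T)=q^0(T)-\varepsilon$ and $\tilde{q}(T)=q(T)$.

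Now observe that the projection $\tilde{q}(\cdot)$ is an admissible trajectory of the original control system, since the augmented dynamics decouples: the equation for $\tilde{q}$ is exactly $\dot{\tilde{q}}=f(\tilde{q},\tilde{u})$ with $\tilde{q}(0)=q_{\mathrm{in}}$, and moreover $\tilde{q}(T)=q(T)\in\mathcal{T}$, so $\tilde{q}(\cdot)$ meets the target and the initial condition. Its cost, however, is
\[
\int_0^T f^0(\tilde{q}(t),\tilde{u}(t))\,dt=\tilde{q}^0(T)=q^0(T)-\varepsilon<q^0(T)=\int_0^T f^0(q(t),u(t))\,dt,
\]
contradicting the optimality of $(q(\cdot),u(\cdot))$. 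We therefore conclude $\hat{q}(T)\in\partial\hat{\mathcal{R}}^T(0,q_{\mathrm{in}})$.

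I do not anticipate any real obstacle in this argument; it is essentially a one-line geometric observation that optimality prevents further decreasing the cost coordinate while keeping the physical state fixed at a target point. The only subtlety worth flagging is that interior and boundary must be understood with respect to the ambient topology of $\mathbb{R}\times M$, so that the perturbation $(q^0(T),q(T))\mapsto(q^0(T)-\varepsilon,q(T))$ does indeed stay inside the reachable set when $\hat{q}(T)$ is assumed interior; this is precisely what makes the augmented formulation the right setting to convert a minimization problem into a boundary-of-reachable-set statement, the geometric viewpoint on which the PMP is subsequently built.
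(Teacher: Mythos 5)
Your argument is correct and is essentially the paper's own proof, just written out in more detail: both proceed by contradiction, using the fact that if $\hat q(T)$ were interior to $\hat{\cal R}^T(0,\qqin)$ then some point $(q^0(T)-\varepsilon,q(T))$ would also be reachable, yielding an admissible trajectory hitting the same endpoint in $M$ with strictly smaller cost. No gap to report.
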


\begin{proof}
By contradiction, if $\hat \qq(T)=(\qq^0(T),\qq(T))\in \mbox{int}\hat{\cal R}^T(0,\qqin)$ then there exists a trajectory reaching a point $(\al,\qq(T))$ with $\al<\qq^0(T)$, i.e., arriving at the same point in $M$, but with a smaller cost. See Fig.~\ref{f1}.
\end{proof}
\begin{figure}
\begin{center}
\includegraphics[scale=0.5]{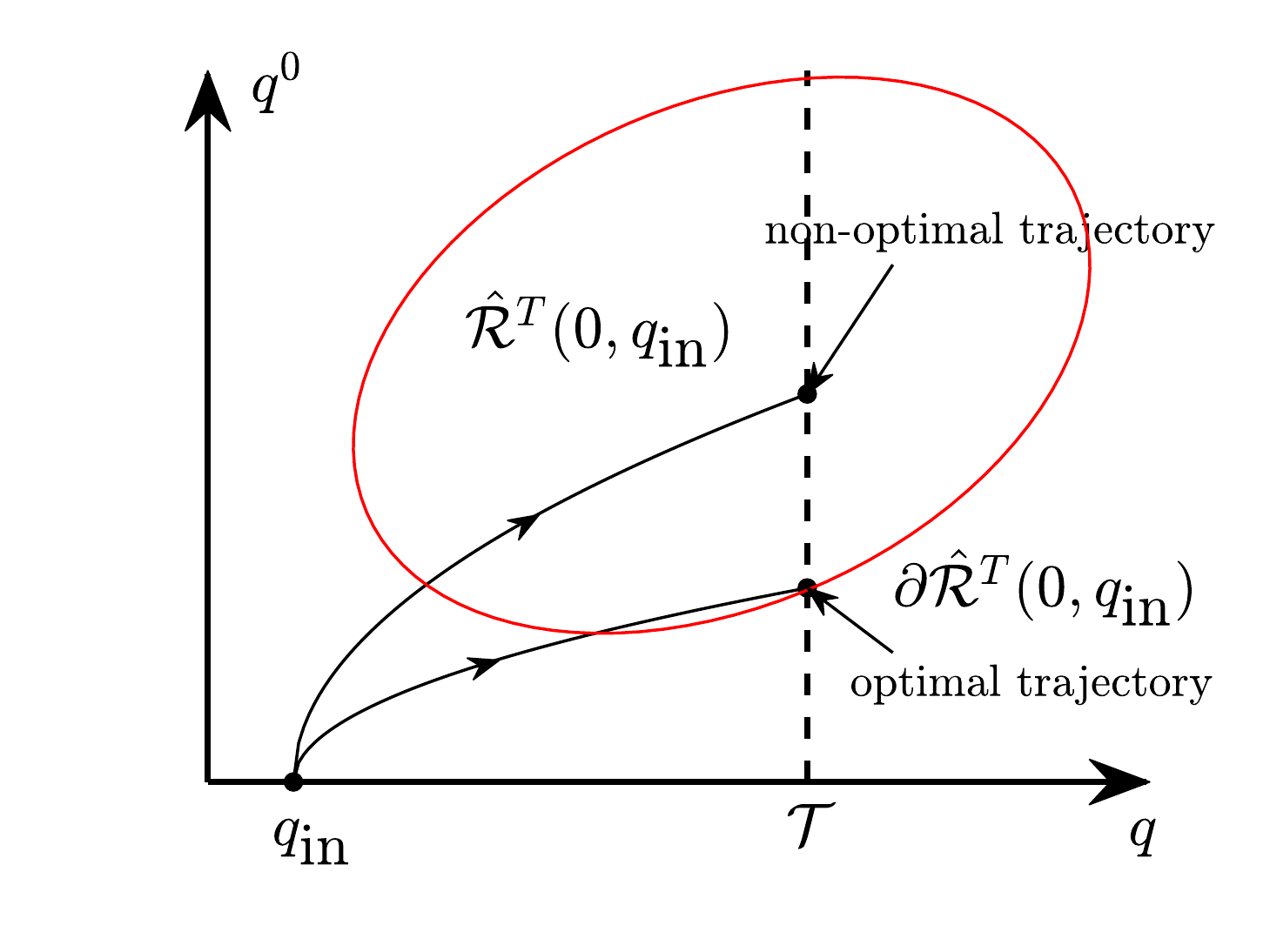}
\end{center}
\caption{The reachable set of the augmented system (area delimited by the red curve). The target $\mathcal{T}$ is represented by the vertical dashed line.\label{f1}}
\end{figure}

When $\hat{\cal R}^T(0,\qqin)\cap(\R\times\target)$ is nonempty and compact, an optimal trajectory for problem {\bf (P1)} exists~\cite{comp6}. We deduce the following property.

\bp
If $\hat{\cal R}^T(0,\qqin)$ is compact, $\target$ is closed, and
$\hat{\cal R}^T(0,\qqin)\cap(\R\times\target)$ is nonempty,
then there exists a solution to   problem {\bf (P1)}.
\ep

Hence the compactness of $\hat{\cal R}^T(0,\qqin)$ is a key point. A similar reasoning allows to relate the compactness of the reachable set within time $T$ and the existence of solutions for the minimum time problem. A sufficient condition for compactness of the reachable set is given by Filippov's theorem, which is stated in Appendix~\ref{filippovtheorem}.
A consequence of Filippov's theorem is the following (see also Propositions~\ref{p-teschio} and
\ref{p-tempomin}
in Appendix~\ref{filippovtheorem}).
%

\begin{prop}\label{p-teschio-quantico}
Consider the Schr\"odinger equation 
\begin{equation}\label{eqwave-tt}
i\dot\psi(t)=\left(H_0+\sum_{j=1}^m u_j(t)H_j\right)\psi(t),
\end{equation}
where $\psi(t)$ evolves in the unit sphere $S^{2N-1}$ of $\C^N$, the matrices $H_0,\ldots,H_m$ are $N\times N$ Hermitian, and $u(t)\in U$. Let $\psi_0$
be the initial condition for $\psi(\cdot)$ and the final target $\mathcal{T}$ be closed.
Assume that the set $U$ is convex and compact.
Then each of the following two optimal control problems admits a solution:
\begin{enumerate}
\item $T$ is fixed, ${\cal R}^T(\qqin)$ intersects $\mathcal{T}$, and the cost function $f^0:S^{2N-1}\times U\to \R$ is convex;
\item  minimum time problem when
${\cal R}(\qqin)$ intersects $\mathcal{T}$.
\end{enumerate}
\end{prop}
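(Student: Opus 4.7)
My plan is to verify the hypotheses of Filippov's theorem (stated in Appendix~\ref{filippovtheorem}) for the augmented system associated with \eqref{eqwave-tt} and then invoke the compactness-plus-continuity argument sketched in the paragraph preceding the statement.

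For part 1, I introduce the augmented state $\hat\qq=(\qq^0,\psi)\in\R\times S^{2N-1}$ with $\dot\qq^0=f^0(\psi,u)$. Filippov's test requires three ingredients: (i) $U$ convex and compact, which is part of the hypotheses; (ii) an a priori bound on trajectories, which is automatic here because $\psi(t)$ is confined to the compact sphere $S^{2N-1}$, and because $f^0$, being continuous on the compact set $S^{2N-1}\times U$, is bounded, so $|\qq^0(t)|\le T\|f^0\|_\infty$ on $[0,T]$; and (iii) convexity of the augmented velocity sets at each state. For (iii) the essential point is that $f(\psi,u)=-i(H_0+\sum_j u_j H_j)\psi$ is \emph{affine} in $u$, so $f(\psi,U)$ is the affine image of a convex compact set and is itself convex and compact at each $\psi$. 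Combined with convexity of $f^0(\psi,\cdot)$, the epigraph-enlarged velocity set
\[
\{(y^0,f(\psi,u))\mid u\in U,\ y^0\ge f^0(\psi,u)\}
\]
is convex: the second component is linear in $u$, and the first-component inequality is preserved by convex combinations via Jensen's inequality. Passing to this enlarged dynamics does not change the value of $\min \qq^0(T)$, since any admissible trajectory of the original system is still admissible and optimal trajectories saturate the inequality. Filippov's theorem therefore yields compactness of $\hat{\mathcal{R}}^T(0,\psi_0)$.

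The conclusion of part 1 is then immediate. Since $\target$ is closed, $\R\times\target$ is closed, so $\hat{\mathcal{R}}^T(0,\psi_0)\cap(\R\times\target)$ is a closed subset of a compact set, and nonempty by the assumption $\mathcal{R}^T(\psi_0)\cap\target\ne\emptyset$. The continuous projection $(\qq^0,\psi)\mapsto \qq^0$ attains its minimum on this compact set, and the minimizer corresponds to an optimal trajectory.

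For part 2, the argument is parallel but uses the reachable set within a time interval. Set $T^*=\inf\{T>0\mid \mathcal{R}^T(\psi_0)\cap\target\ne\emptyset\}$, which is finite by assumption and nonnegative. The free-end-time version of Filippov's theorem (cf.~Proposition~\ref{p-tempomin}) ensures that $\bigcup_{T\in[0,T^*+1]}\mathcal{R}^T(\psi_0)$ is compact in $S^{2N-1}$, again using convex compact $U$, affineness in $u$, and the confinement of $\psi$ to $S^{2N-1}$. A minimizing sequence of trajectories on $[0,T_k]$ with $T_k\searrow T^*$ and endpoints in $\target$ then admits a limit trajectory of duration exactly $T^*$ whose endpoint lies in the closed set $\target$, yielding the desired minimum-time optimal control.

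The main obstacle I anticipate is matching the convexity condition on the velocity sets to the precise form of Filippov's theorem used in the Appendix: some formulations demand convexity of $\hat f(\psi,U)$ itself, and reducing to that form requires the epigraph-enlargement argument above together with the remark that enlargement does not increase the infimum of $\qq^0(T)$. All the other ingredients — compactness of $S^{2N-1}$, affine dependence of the Schrödinger generator on $u$, continuity of $f^0$, and closedness of $\target$ — follow directly from the stated hypotheses.
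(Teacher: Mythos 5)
Your proof is correct and follows the same overall route as the paper: the paper's entire argument is the one-line remark at the end of Appendix~\ref{filippovtheorem} that Propositions~\ref{p-teschio} and~\ref{p-tempomin} applied to the closed quantum system yield the statement, using exactly the ingredients you list (compactness and convexity of $U$, affineness of the Schr\"odinger generator in $u$, compactness of $S^{2N-1}$ guaranteeing global existence of solutions). The one place where you genuinely go beyond the paper is the epigraph enlargement for part~1. This is not a redundant precaution: Proposition~\ref{p-teschio} as stated requires convexity of the set $\hat{\bf F}(\qq)=\{(f^0(\qq,u),f(\qq,u))\mid u\in U\}$, and for a convex but non-affine $f^0$ (e.g.\ $f^0=\sum_j u_j^2$) this set is the graph of a convex function over a convex base and is in general not convex, even though $f(\psi,\cdot)$ is affine. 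Replacing it by the truncated epigraph $\{(y^0,f(\psi,u))\mid u\in U,\ f^0(\psi,u)\le y^0\le \max f^0\}$, which is convex and compact, and observing that the relaxation does not lower $\min q^0(T)$, is precisely the standard (Cesari-type) repair, and your justification of both points is sound. So your write-up actually fills a gap that the paper's ``straightforward application'' glosses over; the only cosmetic caveat is that the enlarged system is most cleanly phrased as a differential inclusion $\dot{\hat\qq}\in\hat{\bf F}(\hat\qq)$ (or as a control system with an auxiliary control coordinate for $y^0$) before invoking Theorem~\ref{t-filippov}. Part~2 of your argument matches Proposition~\ref{p-tempomin} directly, since there $f^0\equiv1$ and no enlargement is needed.
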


\begin{example}
The compactness of $U$ is a key assumption to ensure the existence of an optimal solution. We come back to the example of Sec.~\ref{introqual} and we consider the control of a two-level quantum system whose dynamics is governed by the Hamiltonian $H=u\left(\begin{smallmatrix}0&1\\1&0\end{smallmatrix}\right)$ with $u$ real-valued
~\cite{boscain-q1}. The goal of the control is to steer the wave function from $\psi_{\textrm{in}}=(1,0)$ to the 
target ${\cal T}=\{(0,e^{i\theta})\mid \theta\in \R\}$ in minimum time. If $U=\mathbb{R}$ then there is no optimal solution since the target state can be reached in an arbitrary small time, while the control cannot be realized in a zero time. When $U=(-1,1)$, the transfer can be achieved in a time larger than $\pi/2$, but not exactly in time $T=\pi/2$. It is only in the case where $U$ is compact, e.g. $U=[-1,1]$, that an optimal solution exists. We find for the constraint $|u(t)|\leq 1$ a pulse which allows to bring the population from one level to another in a time $T=\pi/2$.
\end{example}

We show in Sec.~\ref{secthree} and Sec.~\ref{sectwo} how to use these results in two standard quantum control examples.

\section{First-order conditions}\label{s-first}

For a smooth real-valued function of one variable $f^0:\R\to\R$, first-order optimality conditions are obtained from the observation that, at  points where $\frac{df^0}{dx}\neq0$, the function $f^0$, which  is well approximated by its first-order Taylor series, cannot be optimal since it behaves  locally as a non-constant affine function. In this way, one obtains the necessary condition: {\em If $\bar x$ is minimal for $f^0$ then $\frac{df^0}{dx}(\bar x)=0$}. First-order conditions in optimal control are derived in the same way. We have to require that for a small control variation, there is no cost variation at first order.

More precisely, if $J(u(\cdot))$ is the value of the cost for a reference admissible control $u(\cdot)$ (for instance $J(u(\cdot))=\int_0^T f^0(\qq(t),u(t))\,dt$ in Approach A or
$J(u(\cdot))=\int_0^T f^0(\qq(t),u(t))\,dt+d(\target,\qq(T))$ in Approach B),
and $v(\cdot)$ is another admissible control, one would like to consider a condition of the form
\begin{equation}
\left.\frac{\partial J(u(\cdot)+h v(\cdot))}{\partial h}\right|_{h=0}=0.
\label{ccc}
\end{equation}
But difficulties may arise for the following reasons.

We work in an infinite-dimensional space (the space of controls) and hence condition \eqref{ccc} should be required
for infinitely many $v(\cdot)$.
It may very well happen that if $u(\cdot)$ and $v(\cdot)$ are admissible controls then  $u(\cdot)+h v(\cdot)$ is not admissible for every  $h$ close to $0$. Think, for instance, to the case in which $m=1$ and $U=[a,b]$. If $u(t)\equiv b$ is the reference control, then $u(t)+hv(t)$ is not admissible for any non-zero perturbation $v(\cdot)$ when $hv(t)$ is strictly positive. Hence, one should be very careful in choosing the admissible variations in order to fulfill the control restrictions. In Approach  A, one should restrict only to control variations for which the corresponding trajectory reaches the target. More precisely, if $\tilde\qq(\cdot)$ is the trajectory corresponding to the control $\tilde u(\cdot):=u(\cdot)+h v(\cdot)$, one should add the condition
\begin{equation}
\tilde \qq(T)\in\target,
\label{constraint}
\end{equation}
 with $T$ either free or constrained to be the fixed final time depending on the problem under study.
Condition~\eqref{constraint} should be considered as a {\em constraint} for the minimization problem, which results in the use of {\em  Lagrange multipliers (normal and abnormal)}.

The occurrence of Lagrange multipliers in optimal control
is therefore not due to
the fact that the optimization takes place in an infinite-dimensional space, but is rather a general feature of
constrained minimization problems, as explained in Sec.~\ref{ss-lagrange}.

\subsection{Why Lagrange multipliers appear in constrained optimization problems}\label{ss-lagrange}

 We first recall how to find the minimum of a function of $n$ variables $f^0(x)$, where $x=(x_1,\dots,x_n)$, under the constraint $f(x)=0$,  with the method of Lagrange multipliers. Here $f^0$ and $f$ are two smooth functions $\R^n\to\R$.
We have two cases.

\noindent $\bullet$ If $\bar x$ is a point such that $f(\bar x)=0$ with $\nabla f(\bar x)\neq0$, then the implicit function theorem guarantees that $\{x\mid f(x)=0\}$ is a smooth hypersurface in a neighborhood of  $\bar x$.  In this case, a necessary condition for $f^0$ to have a minimum at $\bar x$ is that the level set of $f^0$ (i.e., the set
 on which $f^0$ takes
a constant value) is
not transversal to the set $\{x\mid f(x)=0\}$ at $\bar x$. See Figure \ref{f2}.

More precisely, this means that
\begin{equation}
\exists \lambda\in\R\mbox{ such that  }  \nabla f^0(\bar x)=\lambda \nabla f(\bar x).
\label{e-lagrange-1}
\end{equation}
This statement can be proved by assuming, for instance,  that $\partial_{x_n}f(\bar x)\neq0$. The set $\{x\mid f(x)=0\}$ can then be expressed locally around $\bar x$ as $x_n=g(x_1,\dots,x_{n-1})$. The requirement that
\begin{align*}
& \partial_{x_i}f(x_1,\dots,x_{n-1},g(x_1,\dots,x_{n-1}))\equiv 0,\\
& i=1,\dots,n-1,\\
& \partial_{x_i}f^0(x_1,\dots,x_{n-1},g(x_1,\dots,x_{n-1}))|_{x=\bar x}=0,\\
& i=1,\dots,n-1,
\end{align*}
provides immediately condition~\eqref{e-lagrange-1} with $\lambda=\frac{\partial_{x_n} f^0(\bar x)}{\partial_{x_n} f(\bar x)}$.

\begin{figure}
\begin{center}
\includegraphics[scale=0.5]{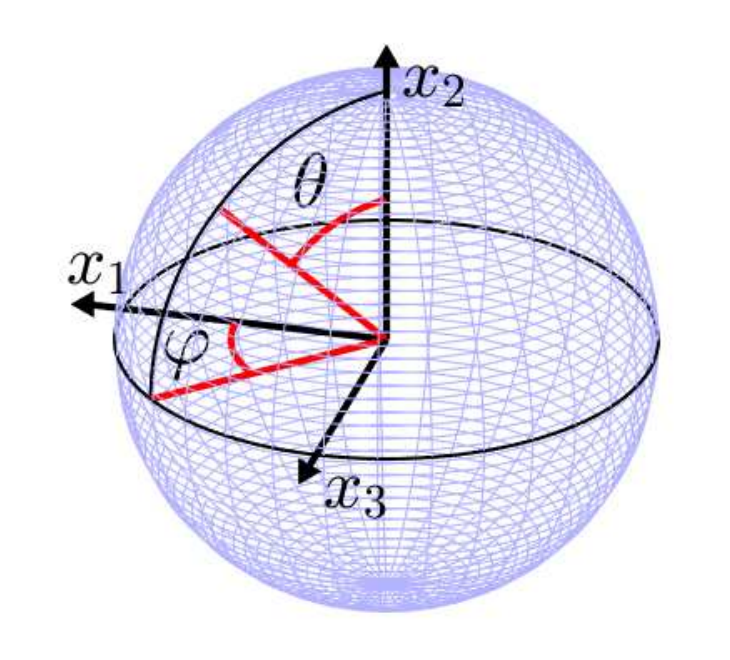}

\end{center}
\caption{ Intersection of the set $f(x)=0$ (solid line) with the level set of $f^0$ (dashed line). The two gradients $\nabla f(x)$ and $\nabla f^0(x)$ are parallel  at
$x=\bar{x}$.\label{f2}}
\end{figure}
Notice that $\lambda$ could be equal to zero. This case corresponds to the situation in which $f^0$ has a critical point at $\bar x$ even in the absence of the constraint.

\noindent $\bullet$ If $\bar x$ is a point such that $f(\bar x)=0$ with $\nabla f(\bar x)=0$ then the set $\{x\mid f(x)=0\}$ could be very complicated in a neighborhood of $\bar x$ (typical examples are a single point, two crossing curves, \dots\ but it could be any closed set).
In general the value of $f^0$ at these points  cannot be compared with neighboring points by requiring that a certain derivative is
zero (think for instance to the case in which $\{x\mid f(x)=0\}$ is an isolated point). However, they are candidates to optimality.
As an illustrative example, consider the case where $n=2$, $f^0(x_1,x_2)=x_1^2+(x_2-1/4)^2$, and $f(x_1,x_2)=(x_1^2+x_2^2)(x_1^2+x_2^2-1)$. The point $\bar{x}=(0,0)$ is an isolated point for which $f(\bar{x})=0$ and $\nabla f(\bar{x})=0$.\\

These results can be rewritten in the following form.
\bt[Lagrange multiplier rule in $\R^n$]\label{t-LM}
Let $f^0$ and $f$ be two smooth functions from $\R^n$ to $\R$. If $f^0$ has a minimum at $\bar x$ on the set $\{x\mid f(x)=0\}$, then there exists $(\bar \lambda,\bar \lambda_0)\in\R^2\setminus\{(0,0)\}$
such that, setting $\Lambda(x,\lambda,\lambda_0)=\lambda f(x)+\lambda_0 f^0(x)$, we have
 \begin{equation}
 \nabla_{x} \Lambda(\bar x,\bar \lambda,\bar \lambda_0)=0,~\nabla_{\lambda} \Lambda(\bar x,\bar \lambda,\bar \lambda_0)=0.
 \label{e-lag2}
 \end{equation}
\et

To show that this statement is equivalent to what we just discussed, we observe that
the second equality in \eqref{e-lag2} gives the constraint $f(\bar x)=0$. For the first equation, we have two cases. If $\bar \lambda_0\neq0$ then we can normalize $\bar \lambda_0=-1$ and we get $\bar \lambda \nabla_x f(\bar x)-\nabla_x f^0(\bar x)=0$, i.e., Eq.~\eqref{e-lagrange-1} with the change of notation $\lambda\to\bar \lambda$. If $\bar \lambda_0=0$ then $\bar \lambda\neq0$ and we get $ \nabla_x f(\bar x)=0$, that is, the second case studied above.

The quantities $\bar \lambda$ and $\bar \lambda_0$ are respectively called {\em Lagrange multiplier} and {\em abnormal Lagrange multiplier}. If $(\bar x ,\bar\lambda_0,\bar \lambda)$ is a solution of Eq.~\eqref{e-lag2} with $\bar \lambda_0\neq0$ (resp., $\bar \lambda_0=0$) then $\bar x$ is called a {\em normal extremal} (resp., {\em abnormal extremal}). An abnormal extremal is a candidate for optimality
 and occurs, in particular, when
we cannot guarantee (at first order) that the set $\{x\mid f(x)=0\}$ is a smooth curve. Abnormal extremals are candidates for optimality regardless of cost $f^0$.
Note that if $\bar x$ is such that $\nabla_xf(\bar x)=0$ and $\nabla_xf^0(\bar x)=0$ then $\bar x$ is both normal and abnormal. This is the case in which $\bar x$ satisfies the first-order condition for optimality even without the constraint, but we cannot guarantee that the constraint is a smooth curve.

In the (infinite-dimensional) case  of an optimal control problem, normal and abnormal Lagrange multipliers appear in a very similar way.

\subsection{Statement of the Pontryagin Maximum Principle}\label{ss-PMP}
In this section,  we state the first-order necessary conditions for optimal control problems, namely the PMP.  The basic idea is to define a new object (the pre-Hamiltonian, see Eq.~\eqref{HAMHAMHAM} below) which allows
 to formulate the Lagrange multiplier
conditions in a simple and direct way.

The theorem is stated in a  more general form that unifies and slightly generalizes the optimal control problems of Approaches A and B. In particular, we add to the  cost $\int_0^Tf^0(\qq(t),u(t))dt$ a general terminal cost $\phi(\qq(T))$. In Approach~A,  we have $\phi=0$, while in Approach~B, $\phi$ represents
the distance from $\qq(T)$ to the target $\target$. We allow the target $\target$ to coincide with $M$. This
corresponds to leaving the final point $\qq(T)$ free
in Approach B.

\bt\label{t-PMP}
Consider the optimal control problem
\bqn
&&\dot \qq(t)=f(\qq(t),u(t)),\nn\\
&&\qq(0)=\qqin,~~~\qq(T)\in\target,\nn\\
&& \int_0^Tf^0(\qq(t),u(t))~dt +\phi(\qq(T))\longrightarrow \min,\nn
\eqn
where
\bi
\iii $M$ is a smooth manifold of dimension $n$, $U\subset \R^m$,
\iii $\target$ is a (non-empty) smooth submanifold of $M$. It can be reduced to a point (fixed terminal point) or coincide with $M$ (free terminal point),
\iii $f$, $f^0$ are smooth,
 \iii $u\in \mathcal{U}$,
\iii $\qq:[0,T]\to M$ is a continuous curve~\cite{comp2}.
\ei
Define
the function (called \emph{pre-Hamiltonian})
\begin{equation}
\HHH (q,p,u,p^0)=\langle p, f(q,u)\rangle+p^0f^0(q,u),\label{HAMHAMHAM}
\end{equation}
with
\[
(q,p,u,p^0)\in T^\ast M\times U\times\R.
\]
(see \cite{comp1} for a precise definition of $T^\ast M$).

If the pair $(\qq,u):[0,T]\to M\times U$ is
optimal, then there
exists a never vanishing continuous pair
$(p,p^0):[0,T]\ni t \mapsto
(p(t),p^0)\in
T^\ast_{\qq(t)}M\times \R$
where $p^0\leq0$ is  a constant and such that for almost every (a.e.) $t\in[0,T]$ we have
\begin{itemize}
\item[{\bf i)}]
$\dot \qq(t)=$ {\large $\frac{\partial \HHH
}{\partial p}$}$(\qq(t),p(t),u(t),p^0)$ (Hamiltonian equation for $\qq$);
\item[{\bf ii)}] $\dot p(t)=-${\large$\frac{\partial \HHH
}{\partial q}$}$(\qq(t),p(t),u(t),p^0)$ (Hamiltonian equation for $p$);
\item[{\bf iii)}]
the quantity
$\HHH_M(q(t),p(t),p^0):=\max_{v\in U}\HHH(q(t),p(t),v,p^0)$
is well-defined and
$$\HHH (\qq(t),p(t),u(t),p^0)=\HHH_M(\qq(t),p(t),p^0)$$
which corresponds to the maximization condition.
\end{itemize}
Moreover,
\begin{itemize}
\item[{\bf iv)}]
 there exists a constant $c\ge 0$ such that
$\HHH_M(\qq(t),p(t),p^0)=c
$ on  $[0,T]$,
with $c=0$ if
the final time
is free (value of the Hamiltonian);
\item[{\bf v)}] for every $v\in T_{\qq(T)}\target$, we have $\langle p(T),v\rangle=p^0\langle d\phi(\qq(T)),v\rangle$  (transversa\-lity condition),  where $d\phi$ is the differential of the function $\phi$.
\end{itemize}
\et

\noindent Some comments are in order.\\
\noindent $\bullet$
A proof of the PMP can be found, for instance, in \cite{agrachev-book,pontryaginbook}. An intuitive derivation based on the Lagrange multiplier rule is presented in Section~\ref{WPMP} in the case in which $T$ is fixed, the final point is free, and $U=\R^m$.

\noindent $\bullet$ The covector $p$ is called \emph{adjoint state} in the control theory literature~\cite{comp1}, while $p^0$ is the \emph{abnormal multiplier}. The quantities $p(\cdot)$ and $p^0$ play the role of Lagrange multipliers for the constrained optimization problem. We point out the similarity between the expressions of $\HHH$ and of $\Lambda$ in Th.~\ref{t-LM} (with the change of notation $q\to x$ and $p\to\lambda$).

\noindent $\bullet$ A trajectory $q(\cdot)$ for which there exist $p(\cdot),~u(\cdot)$, and $p^0$ such that
$(\qq(\cdot),p(\cdot),u(\cdot),p^0)$ satisfies all the conditions given by the PMP is called an {\em extremal trajectory} and the 4-uple $(\qq(\cdot),p(\cdot),u(\cdot),p^0)$ an {\em extremal} or, equivalently, an \emph{extremal lift of $\qq(\cdot)$}. Such an extremal is called {\em normal} if $p^0\neq0$ and {\em abnormal} if
$p^0=0$. It may happen that
 an extremal trajectory $\qq(\cdot)$ admits both a normal extremal lift
$(\qq(\cdot),p_1(\cdot),u(\cdot),p^0)$ and an abnormal
one $(\qq(\cdot),p_2(\cdot),u(\cdot),0)$.
In this case, we say that the extremal trajectory
$\qq(\cdot)$ is a {\em non-strict abnormal trajectory.} Note that (as in the finite-dimensional case) abnormal trajectories are  candidates for optimality regardless of the cost.
In the finite-dimensional case, they correspond to singularities of the constraint function, while here they correspond to singularities of the functional associating with a control $v(\cdot)$ the endpoint at time $T$ of the solution of $\dot q(t)=f(q(t),v(t))$, $q(0)=\qqin$.
It is worth noticing that abnormal extremals do not only appear in pathological cases, but they are often present in real-world applications, as for instance in the two examples presented in Sec.~\ref{secthree} and \ref{sectwo}.

\noindent $\bullet$ The PMP is only a necessary condition for optimality. It may very well happen that an extremal trajectory  is not optimal. The PMP can therefore provide several candidates for optimality, only some of which are optimal (or even none of them if the step of existence has not been verified, see Sec.~\ref{s-existence}).

\noindent $\bullet$ Since the equation for $p(\cdot)$ at point~{\bf ii)} of the PMP is linear, if  $(\qq(\cdot),p(\cdot),u(\cdot),p^0)$ is an extremal, then for every $\al>0$, $(\qq(\cdot),\al p(\cdot),u(\cdot),\al p^0)$ is an extremal  as well. As a consequence, some useful normalizations are possible. A typical normalization for normal extremals is to require  $p^0=-\frac12$ but other choices are also possible.

\noindent $\bullet$ When there is no final cost ($\phi=0$), the transversality condition $(\textbf{v})$ simplifies to:
\begin{equation}
\langle p(T),T_{\qq(T)}{\cal T}\rangle=0.
\label{tc}
\end{equation}
When the final point is fixed ($\target=\{\qqfin\}$), $T_{\qq(T)}\target$ is a zero-dimensional manifold and hence condition \eqref{tc} is empty.
When the final point is free ($\target=M$)
the transversality condition simplifies to $p(T)=p^0 d\phi(q(T))$. In local coordinates, we recover that $p(T)$ is proportional to the gradient of $\phi$ evaluated at the point $q(T)$.
Notice that, since $(p(T),p^0)\ne 0$, in this case one necessarily has $p^0\ne0$.

Table~\ref{tabext} gives a list of the possible extremal solutions of the PMP.

{
\begin{example}\label{exwavefunction}
As a general example in quantum control, we consider a dynamical system governed by Eq.~\eqref{eqwave} where the goal is to minimize at the fixed final time $T$
the cost $-|\langle \psi_{\mathrm{fi}}|\psi(T)\rangle |^2+\frac{1}{2}\int_0^T\sum_{j=1}^mu_j^2(t)dt$, where $\psi_{\mathrm{fi}}$ is a target state towards which we want to drive the system (up to a global phase)~\cite{gross}. A direct application of the PMP shows that the pre-Hamiltonian $\mathcal{H}$ can be expressed as:
$$
\mathcal{H}(\psi,\chi,u_j,p^0)=\Re(\langle\chi |\dot{\psi}\rangle)+\frac{p^0}{2}\sum_j u_j^2
$$
where the adjoint state, denoted here by $\chi$, is an abstract wave function which can be chosen so that it belongs to the unit sphere in $\mathbb{C}^N$, $\langle\chi |\chi\rangle =1$. Since $\psi$ and $\chi$ are complex-valued functions, the pre-Hamiltonian is defined through the real part of the scalar product between $\chi$ and $\dot{\psi}$. The standard definition used in Th.~\ref{t-PMP} can be found by introducing the real and imaginary parts of the wave functions. Using Eq.~\eqref{eqwave}, we deduce that
\begin{equation}\label{hamwave}
\mathcal{H}(\psi,\chi,u_j,p^0)=\Im(\langle\chi |H_0+\sum_ju_jH_j|\psi\rangle)+\frac{p^0}{2}\sum_j u_j^2
\end{equation}
which leads to
\begin{equation*}
i\dot\chi(t)=(H_0+\sum_{j=1}^m u_j(t)H_j)\chi(t),
\end{equation*}
i.e., $\chi$ also satisfies the Schr\"odinger equation. We stress that this condition is only true in the bilinear case. A specific equation has to be computed for other dynamics, such as, e.g., the Gross--Pitaevskii equation~\cite{bolzi2007}. The final condition $\chi(T)$ is given by the transversality condition {\bf v)} of the PMP:
\begin{equation}\label{finalwave}
\chi(T)=p^0\langle \psi_{\mathrm{fi}}|\psi(T)\rangle\psi_{\mathrm{fi}}.
\end{equation}
The maximization condition of the PMP leads to the  constraints
$$
\frac{\partial \mathcal{H}}{\partial u_j}(\psi,\chi,u_j,p^0)=0
$$
for $j=1,\dots,m$. A direct computation from Eq.~\eqref{hamwave} gives
$$
\Im(\langle\chi |H_j|\psi\rangle)+p^0 u_j=0.
$$
For the normal extremal with $p^0=-1$, we finally get
\begin{equation}\label{normalwaveu}
u_j=\Im(\langle\chi |H_j|\psi\rangle).
\end{equation}
\end{example}

\subsection{Use of the PMP}\label{secusePMP}

The application of the PMP is not so straightforward. Indeed, there are many conditions to satisfy and all of them are coupled. This section is aimed at describing how to use it in practice.

The following points should be followed first for normal extremals ($p^0<0$) and then for abnormal extremals ($p^0=0$).  In the first case, $p^0$ can be normalized to $-1/2$ since $p^0$ is defined up to a multiplicative positive factor. In the different steps, several difficulties (that are briefly mentioned) may arise. Most of them should be solved case by case, since they can be of different nature depending on the problem under study.
\bi
\iii[Step 1.] Use the maximization condition {\bf iii)} to express, when possible, the control as a function of the state and of the covector, i.e., $u=w(q,p)$.
Note that if we have $m$ controls (e.g.,  if $U$ is an open subset of $\R^m$) then the first-order maximality conditions
give $m$ equations for $m$ unknowns.
When
the maximization condition permits to express
 $u$ as a function of $q$ and $p$, we say that the control is {\em regular}, otherwise
the control is said to be {\em singular}. 
We may have regions where the control is regular and regions where it is singular. For singular controls, finer techniques have to be used to derive the expression of the control. These different points are discussed in the examples in Sec.~\ref{secthree} and \ref{sectwo}.

\iii[Step 2.] Insert the control found in the previous step into the Hamiltonian equations {\bf i)} and {\bf ii)}:
\begin{equation}
\begin{cases}
\dot \qq(t)=\frac{\partial \HHH
}{\partial p}(\qq(t),p(t),w(\qq(t),p(t)),p^0)\\
\dot p(t)=-\frac{\partial \HHH
}{\partial q}(\qq(t),p(t),w(\qq(t),p(t)),p^0).\label{cp}
\end{cases}
\end{equation}
In case the previous step provides a smooth $w(\cdot,\cdot)$, this is a well-defined set of $2n$ equations for $2n$ unknowns. Note, however, that the boundary conditions are given in a non-standard form since we know $\qq(0)$ but not $p(0)$. Instead of $p(0)$, we have a partial information on $\qq(T)$ and $p(T)$ depending on the dimension of $\target$ (see the next step to understand how these final conditions are shared between $\qq(T)$ and $p(T)$).
We then solve Eq.~\eqref{cp} for {\em fixed} $\qq(0)=\qqin$ and {\em any} $p(0)=\ppin \in T^\ast_{\qqin}M$.
Let us denote the solution as
\[
\qq(t;
\ppin,p^0),\qquad p(t;
 \ppin,p^0).
\]

We stress that when $w(\cdot,\cdot)$ is not regular enough,
solutions to the Cauchy problem \eqref{cp}
with $\qq(0)=\qqin$ and $p(0)=\ppin$ may fail to exist or to be unique.

\iii[Step 3.] Find $\ppin$ such that
\begin{equation}
\qq(T;
\ppin,p^0)\in\target.
\label{ttt}
\end{equation}
Note that if $\target$ is reduced to a point and $T$ is fixed, we get $n$ equations for $n$ unknown (the components of $\ppin$). If $T$ is free then an additional equation is needed. This condition is given by the relation {\bf iv)} in the PMP. If $\target$ is a $k$-dimensional submanifold of $M$ ($k\leq n$) then Eq.~\eqref{ttt} provides only $n-k$ equations and the remaining ones correspond to the transversality condition  {\bf v)} of the PMP.

\iii[Step 4.] If Eq.~\eqref{ttt} (together with the transversality condition and condition {\bf iv)} of the PMP if $T$ is free) has a unique solution $\ppin$  and if we have verified a priori the existence step, then the optimal control problem is solved. Unfortunately, in general there is no reason for Eq.~\eqref{ttt} to provide a unique solution. Indeed, the PMP is only a necessary condition for optimality. If several solutions are found, one should choose among them the best one by a direct comparison of the value of the cost. This is, in general, a non-trivial step, complicated by the difficulty of  solving explicitly Eq.~\eqref{ttt}. For this reason, several techniques have been developed to select the extremals. Among others, we mention the sufficient conditions for optimality given by Hamilton--Jacobi--Bellman theory and synthesis theory. We refer to \cite{enciclopedia} for a discussion.
In Example~1 (Section~\ref{secthree}) we are able to select the optimal solution without the use of sufficient conditions for optimality, while this is not the case in Example~2 (Section~\ref{sectwo}).
\ei

\begin{table}[tb]
\caption{List of the possible extremal solutions of the PMP.\label{tabext}}
\begin{tabular}{|c|c|}
\hline
{\bf Name}& Definition \\
\hline\hline
Extremal & 4-uple $(q(\cdot),p(\cdot),u(\cdot),p^0)$ solution of the PMP \\
\hline
Normal extremal & extremal with $p^0\neq 0$\\
\hline
Abnormal extremal & extremal with $p^0=0$ \\
\hline
Non-strict abnormal trajectory & Trajectory which admits both abnormal and normal lifts \\
\hline
Regular control & When the maximization condition of the PMP gives $u=\omega (q,p)$ \\
\hline
Singular control & When the control is not regular. \\
\hline
\hline
\end{tabular}
\end{table}

\begin{example}
We come back to the case of Example~\ref{exwavefunction}. We have shown with Eq.~\eqref{normalwaveu} that the maximization condition allows to express the $m$ controls $u_j$ as functions of $\psi$ and $\chi$ in the normal case. This situation therefore corresponds to Step 1 above where the control is regular. For abnormal extremals for which $p^0=0$, we get
\begin{equation}\label{eqabnormalwave}
\Im(\langle\chi |H_j|\psi\rangle)=0,
\end{equation}
and the control is singular because this relation does not give directly the expression of $u_j$.

Applying Step 2, we obtain in the regular situation the following coupled equations for $\psi$ and $\chi$:
\begin{equation}\label{eqdynoptwave}
\begin{cases}
i\dot\psi=[H_0+\sum_j\Im(\langle\chi |H_j|\psi\rangle)H_j]\psi \\
i\dot\chi=[H_0+\sum_j\Im(\langle\chi |H_j|\psi\rangle)H_j]\chi \\
\end{cases}
\end{equation}
with the boundary conditions $\psi(0)=\psi_{\textrm{in}}$ and Eq.~\eqref{finalwave}. In Step 3, we then solve Eq.~\eqref{eqdynoptwave} to find the initial condition $\chi(0)$ such that the final state $\chi(T)$ satisfies Eq.~\eqref{finalwave} at time $T$.

The numerical procedures used to select the initial condition $\chi(0)$, called \emph{shooting methods} in control theory,
are based on suitable adaptations of the Newton algorithm~\cite{bonnardbook2012,brysonbook}.
 Step 4 consists finally in comparing the cost of the different solutions found in Step 3.

In the abnormal case, we use the fact that Eq.~\eqref{eqabnormalwave} is satisfied in a non-zero time interval so the time derivatives of $\Im(\langle\chi |H_j|\psi\rangle)$ are also zero. The first time derivative leads to the 
$m$ relations
$$
\sum_{j=1}^mu_j\Re(\langle\chi |[H_k,H_j]|\psi)=\Re(\langle\chi |[H_0,H_k]|\psi\rangle),
$$
with $k=1,\dots,m$. This linear system can be expressed in a more compact form as
$$
Ru=s,
$$
where $R$ is a $m\times m$ matrix with  elements $R_{kj}=\Re(\langle\chi |[H_k,H_j]|\psi)$ and $s$ a vector of coordinates $s_k=\Re(\langle\chi |[H_0,H_k]|\psi\rangle)$. We deduce that the control $u$ is given as a function of $\psi$ and $\chi$ as $u=R^{-1}s$. If this system is singular then the second time derivative has to be used.
This  is the case, e.g., for $m=1$, when a 
further constraint has to be fulfilled, namely, $\Re(\langle\chi |[H_0,H_{1}]|\psi\rangle)=0$. From the derivation of $u$, we then apply Steps 2 and 3 to the abnormal extremals.

\end{example}

\section{Gradient-based optimization algorithm}\label{sectiongradient}
The aim of this section is to introduce a first order gradient-based optimization algorithm 
based on the PMP. We first derive the necessary conditions of the PMP in the case of a fixed control time without any constraint on the final state and on the control. This construction is known in control theory as the weak PMP. Note that we consider only the case of regular control. Iterative algorithms can be deduced from these conditions. In a second step, we apply this idea to quantum control and we show how a gradient-based optimization algorithm, GRAPE~\cite{KHANEJA2005}, can be designed from this approach.
\subsection{The Weak Pontryagin Maximum Principle}\label{WPMP}
We consider a control system whose dynamics are governed by Eq.~\eqref{control}, when the final state is free and the control is unconstrained. The objective is to solve a control problem in the Approach B, as defined in Sec.~\ref{secformu} with a fixed control time $T$. We recall that the cost functional to minimize 
can be expressed as
$$
J(u(\cdot))=\int_0^T f^0(\qq(t),u(t))\,dt+d(\target,\qq(T)).
$$

Considering the evolution equation~\eqref{control} as a dynamical constraint (in infinite dimension), in order to apply (formally)
the Lagrange multiplier rule for normal extremals, we introduce the functional
\begin{align}\label{jbar}
\Lambda(&p(\cdot),u(\cdot))= d(\target,\qq(T))+\int_0^T f^0(\qq(t),u(t))\,dt \nonumber \\
&  +\int_0^T\langle p(t), \dot{q}(t)-f(q(t),u(t))\rangle\,dt.
\end{align}
We stress that the Lagrange multiplier $p(\cdot)$ is here a function on $[0,T]$.
Integrating by parts Eq.~\eqref{jbar}, we obtain
\begin{align*}
\Lambda(&p(\cdot),u(\cdot))= d(\target,\qq(T))+\langle p(T),q(T)\rangle\\
 & -\langle p(0),q(0)\rangle \\& -\int_0^T (H(q(t),p(t),u(t))+\langle\dot{p}(t),q(t)\rangle )\,dt,
\end{align*}
with
\begin{equation}\label{pmpweak}
H(q,p,u)=\langle p,f(q,u)\rangle -f^0(\qq,u).
\end{equation}
Note that the scalar function $H$ has the same expression (with $p^0=-1$) as the pre-Hamiltonian $\mathcal{H}$ introduced in Sec.~\ref{ss-PMP} for the statement of the PMP. Since there is no constraint on the control law, i.e., $u(t)\in\mathbb{R}^m$ for any time $t$, we consider the variation $\delta\Lambda$ in $\Lambda$ at first order due to the variation $\delta u$ of $u$. This change of control induces a variation of the trajectory $\delta q(t)$ with $\delta q(0)=0$, the initial point being fixed. Note that the adjoint state $p$ is not modified. We arrive at:
\begin{eqnarray}\label{jbar2}
& & \delta\Lambda= \langle\left.\frac{\partial d({\cal T},q)}{\partial q}\right|_{q=q(T)}+p(T),\delta q(T)\rangle
 \nonumber \\
& & -\int_0^T [\langle \frac{\partial H}{\partial q}+\dot{p},\delta q\rangle +\frac{\partial H}{\partial u}\delta u]\,dt.
\end{eqnarray}
A necessary condition for $\Lambda$ to be an extremum is $\delta\Lambda=0$ for any  variation $\delta u$. A solution is given by taking an  adjoint state $p$ satisfying
\begin{equation}\label{ff1}
\dot{p}=-\frac{\partial H}{\partial q},
\end{equation}
the final boundary condition
\begin{equation}\label{ff2}
p(T)=-\left.\frac{\partial d({\cal T},q)}{\partial q}\right|_{q=q(T)},
\end{equation}
and requiring
\[\frac{\partial H}{\partial u}=0\qquad\mbox{on } [0,T].\]

As it could be expected, we find here a weak version of the equations of the PMP introduced in Sec.~\ref{ss-PMP} where the maximization of the pre-Hamiltonian is replaced by an extremum condition given by the partial derivative with respect to $u$. We point out that this approach works if the set $U$ is open.
\subsection{Gradient-based optimization algorithm}\label{secgrapeclass}
The set of nonlinear coupled differential equations can be solved numerically by an iterative algorithm. The basic idea used in such algorithms can be formulated as follows. Assume that a control $u(\cdot)$ sufficiently close 
to the optimal solution is known.
If $p(\cdot)$ satisfies Eq.~\eqref{ff1} and \eqref{ff2} then we deduce from Eq.~\eqref{jbar2}
that
\[\delta\Lambda=-\int_0^T \frac{\partial H}{\partial u}\delta u\,dt.\]
This suggests that a better control can be achieved with the choice $\delta u=\epsilon \frac{\partial H}{\partial u}$ where $\epsilon$ is a small positive
parameter.
The iterative algorithm is then described by the following steps.
\begin{enumerate}
\item Choose a guess control $u(\cdot)$.
\item Propagate forward the state of the system $q$ from $\dot{q}=f(q,u)$, with the initial condition $q(0)=q_0$.
\item Propagate backward the adjoint state of the system from $\dot{p}=-\frac{\partial H}{\partial q}$, with the final condition $p(T)=-\left.\frac{\partial d({\cal T},q)}{\partial q}\right|_{q=q(T)}
$.
\item Compute the correction $\delta u$ to the control law, $\delta u(t)=\epsilon \frac{\partial H}{\partial u}$  where $\epsilon>0$ is a small parameter.
\item Define the new control $u\mapsto u+\delta u $.
\item Go to step 2 and repeat until a given accuracy is reached.
\end{enumerate}
This algorithm is an example of first-order gradient-based optimization algorithm. By construction, it converges towards an extremal control of $J$ which is not, in general, a global 
minimum solution of the control problem, but only a local one.
Note that several numerical details are hidden in the description of this method. Among others, we mention the choice of the guess control which allows to reach a good local solution and the determination of the parameter $\epsilon$. This latter must be sufficiently small to remain in the first order approximation, but large enough to reduce the number of iterations and therefore the computational time. We refer the reader to standard numerical optimization textbooks to address these issues~\cite{brysonbook}.

This approach can be directly applied to quantum systems. The bilinearity of quantum dynamics allows to simplify the different terms used in the algorithm. We consider a quantum system whose dynamics are governed by the Schr\"odinger equation
$$
i\dot{\psi}(t)=(H_0+u(t)H_1)\psi(t).
$$
The goal of the control process is to bring the system from $\psi_{\textrm{in}}$ towards  $\psi_{\textrm{fi}}$ in a fixed time $T$. The control problem aims at minimizing 
the
cost
$$
J=\frac{1}{2}\int_0^Tu(t)^2dt-|\langle \psi_{\rm fi}|\psi(T)\rangle |^2.
$$
In the normal case, the pre-Hamiltonian $H$ can be expressed as:
$$
H=\Re[\langle \chi|\dot{\psi}\rangle ]-\frac{u^2}{2}=\Im[\langle \chi|H_0+uH_1|\psi\rangle]-\frac{u^2}{2},
$$
where the wave function $\chi$ is the adjoint state of the system. We thus deduce that the gradient on which the iterative algorithm is based is given by
\begin{equation}\label{eqgrad}
\frac{\partial H}{\partial u}=\Im[\langle \chi|H_1|\psi\rangle]-u(t).
\end{equation}
We find with Eq.~\eqref{eqgrad} the standard control correction used in the GRAPE algorithm in quantum control~\cite{KHANEJA2005}.

\section{Example 1: A three-level quantum system with complex controls}\label{secthree}
In this section, we mainly use the results of \cite{boscain-q1}, see also \cite{boscainres,boscain-nonisotropic,sugny08}. Note, however, that original results concerning the selection of the best extremal among all the possible solutions are presented.
\subsection{Formulation of the quantum control problem}\label{s:excexc1}
We consider a three-level quantum system whose dynamics are governed by the Schr\"odinger equation. The system is described by a pure state $\psi(t)$ belonging
to a three-dimensional complex Hilbert space.
The system is
characterized, in the absence of external fields,
 by three energy levels $E_1$, $E_2$, and $E_3$ and is
controlled by the Pump and the Stokes pulses which couple, respectively, states one and two and states two and three~\cite{vitanov2001}. Note that there is no direct coupling between levels one and three. The time evolution of $\psi(t)$ is given by
$$
i\dot\psi(t)=H(t
)\psi(t),
$$
where
$$
H(t
)=\begin{pmatrix}
E_1 & \Omega_1(t) & 0 \\
\Omega_1^*(t) & E_2 & \Omega_2(t) \\
0 & \Omega_2^*(t) & E_3 \\
\end{pmatrix}.
$$
Here $\Omega_1(t),\Omega_2(t)\in \mathbb{C}$ are the two time-dependent complex control parameters. We denote by $\psi_1(t)$, $\psi_2(t)$, and $\psi_3(t)$ the coordinates of $\psi(t)$, that is, $\psi(t)=(\psi_1(t),\psi_2(t),\psi_3(t))$.
They
satisfy
$$
|\psi_1(t)|^2+|\psi_2(t)|^2+|\psi_3(t)|^2=1,
$$
leading to $M=S^5$, a manifold of real dimension 5. The goal of the control process is to transfer population from the first eigenstate to the third one in a fixed time $T$. In other words, the aim is to find a trajectory in $M$ going from the {submanifold $|\psi_1|^2=1$ to the one with $|\psi_3|^2=1$.} The system is completely controllable thanks to Point~2 in Proposition~\ref{p-controllability}
and the approach A can be chosen. The optimal control problem is defined through the cost functional
$$
C={ \int_0^T(|\Omega_1(t)|^2+|\Omega_2(t)|^2)dt,}
$$
to be minimized. The cost $C$ can be interpreted as the energy of the control laws  used in the control process. We consider the specific case in which the control parameters are in resonance with the energy transition.
More precisely, we assume that the pulses $\Omega_1(t)$ and $\Omega_2(t)$ can be expressed as
$$
\begin{cases}
\Omega_1(t)=u_1(t)e^{i(E_2-E_1)t}\\
\Omega_2(t)=u_2(t)e^{i(E_3-E_2)t}
\end{cases}
$$
with $u_1(t),u_2(t)\in \R$ to be optimized. Note that this assumption is not restrictive since it can be shown that the resonant case corresponds to the optimal solution~\cite{boscain-q1,boscainres}. The uncontrolled part, called the drift, together with the imaginary unit in the Schr\"odinger equation, can be eliminated through a unitary transformation $Y(t)$ given by $$Y(t)=\textrm{diag}(e^{-iE_1t},e^{-i(E_2t+\pi/2)},e^{-i(E_3t+\pi)}).$$

Defining a new wave function $x$ such that $\psi(t)=Y(t) x(t)$, we obtain that $x(t)$ solves the
Schr\"odinger equation
$$
i\dot x(t)=H'(t)x (t),
$$
where $H'=Y^{-1}HY-iY^{-1}\dot{Y}$. Since $Y$ only modifies the phases of the coordinates, $\psi(t)$ and $x(t)$ correspond to the same population distribution, i.e., setting $x=(x_1,x_2,x_3)$ we have $|x_j(t)|^2=|\psi_j(t)|^2$, $j=1,2,3$.

Computing explicitly $H'$  we arrive at
\begin{equation}
\begin{pmatrix}
\dot{x}_1 \\
\dot{x}_2 \\
\dot{x}_3
\end{pmatrix} =\begin{pmatrix}
0 & -u_1(t) & 0\\
u_1(t) & 0 & -u_2(t) \\
0 & u_2(t) & 0
\end{pmatrix}\begin{pmatrix}
{x}_1 \\
{x}_2 \\
{x}_3
\end{pmatrix}.
\label{uuu}
\end{equation}
Without loss of generality, the optimal control problem can be restricted to the submanifold $S^2\subset M$ defined by $\{(x_1,x_2,x_3)\in \R^3\mid x_1^2+x_2^2+x_3^2=1\}$. This statement is trivial if the initial condition belongs to $S^2$ (i.e., if $x(0)$ is real).
Otherwise,  a straightforward change of phase of the coordinates $\psi_k$ allows to come back to this condition. Equation~\eqref{uuu} can be expressed in a more compact form as
\begin{equation}
\dot x=u_1(t) F_1(x)+u_2(t) F_2(x),\label{vvv1}
\end{equation}
where
$$
F_1(x)=\left(\ba{c}-x_2\\x_1\\0\ea\right),~
F_2(x)=\left(\ba{c}0\\-x_3\\x_2\ea\right).
$$
Notice that $F_1$ and $F_2$ are two vector fields defined on the sphere $S^2$ representing, respectively, a rotation along the $x_3$ axis and along the $x_1$ axis.
 Since $\Omega_1(t)$ and $\Omega_2(t)$ differ from
$u_1(t)$ and $u_2(t)$  only for phase factors, the minimization problem becomes
\begin{equation}
C={ \int_0^T(u_1(t)^2+u_2(t)^2)dt}\to\min,
\label{vvv2}
\end{equation}
with $T$ fixed. Concerning initial and final conditions, since the goal is to go from the submanifold $|\psi_1|^2=1$ to the submanifold
$|\psi_3|^2=1$, we can assume without loss of generality that $x(0)=(1,0,0)$ (again, a straightforward change of coordinates allows to come back to this condition
otherwise).
Since we are now restricted to real variables, the target is  $\target=\{(0,0,+1),(0,0,-1)\}$.
Now, being the target made of two points only, one should compute separately the optimal trajectories going from $(1,0,0)$
to $(0,0,1)$ and those going from $(1,0,0)$
to $(0,0,-1)$. Finally between all these trajectories, one should take the ones having the smallest cost.
Because of the symmetries of the system, the two families of optimal trajectories have precisely the same cost (this will be clear in the explicit computations later on). As a consequence, without loss of generality, we can  fix the final condition as $x(T)=(0,0,1)$.

The problem \eqref{vvv1}--\eqref{vvv2} with fixed initial and final conditions is actually a celebrated problem in OCT called  the Grushin model on the sphere~\cite{grushin-sphere,boscain-nonisotropic,grusin}.

\subsection{Existence}
For convenience, let us re-write the optimal control problem as follows:\\

\noindent{\bf Problem} P$^{{\rm Grushin}}(T)$
\begin{align}
&\dot x=u_1(t) F_1(x)+u_2(t) F_2(x),\nn\\
&{ \int_0^T(u_1(t)^2+u_2(t)^2)dt}\to\min,\quad T\mbox{ fixed},\nn\\
&x(0)=(1,0,0),\quad x(T)=(0,0,1),\nn\\
&u_1,u_2\in \mathcal{U},~U=\mathbb{R}.
\nn
\end{align}
To prove the existence of solutions to P$^{{\rm Grushin}}(T)$, one could be tempted to use Propositions~\ref{p-teschio-quantico} and~\ref{p-teschio}. However
$u_1$ and $u_2$ take values in $\R$ and, hence, the second hypothesis of the proposition is not verified.

Instead, we are going to use the following fact.
\bp
\label{mufcona} If $u_1(t),u_2(t)$ are optimal controls for P$^{{\rm Grushin}}(T)$, then $u_1(t)^2+u_2(t)^2$  is almost every\-where constant and positive on $[0,T]$.
Moreover, for every $\al>0$, we have that $\al u_1(\al t),$ $\al u_2(\al t)$ are optimal controls for {\rm P}$^{{\rm Grushin}}(T/\alpha)$.
\ep
To prove this proposition we first state a general lemma for driftless systems.
\begin{lemma}
\label{melenzane}
Consider a  control system of the form $\dot x=\sum_{j=1}^m u_j(t)F_j(x)$, where $x\in M$ and $u_j\in \mathcal{U}$, $j=1,\ldots,m$. Then
any admissible trajectory $x(\cdot)$ defined on $[0,T]$ and corresponding to controls $u_j(\cdot)$, $j=1,\ldots,m$, is a reparameterization  of an admissible trajectory $\bar x(\cdot)$ defined on the same time interval  whose controls $\bar u_j(\cdot)$ satisfy a.e. $\sqrt{\sum_{j=1}^m \bar u_j(t)^2}=L/T$, where $L=\int_0^T\sqrt{\sum_{j=1}^m  u_j(t)^2}dt$.  Alternatively, $x(\cdot)$ is a reparameterization  of a trajectory $\bar x(\cdot)$ defined on $[0,L]$  whose controls satisfy a.e. $\sqrt{\sum_{j=1}^m \bar u_j(t)^2}=1$.
\end{lemma}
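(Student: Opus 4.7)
The plan is to prove the lemma by a time reparameterization of the trajectory, using the driftless (linear-in-control) structure of the dynamics to conclude that rescaling time rescales controls in a corresponding way.

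First, I would define the ``length function'' $s:[0,T]\to[0,L]$ by
\[
s(t)=\int_0^t \sqrt{\sum_{j=1}^m u_j(\tau)^2}\,d\tau.
\]
This is a continuous, non-decreasing function with $s(0)=0$ and $s(T)=L$. If the integrand is positive almost everywhere, then $s$ is a homeomorphism from $[0,T]$ onto $[0,L]$; I would invert it and set $\bar x(\sigma)=x(s^{-1}(\sigma))$ on $[0,L]$. By the chain rule and the identity $\dot x(t)=\sum_j u_j(t)F_j(x(t))$, we get
\[
\frac{d\bar x}{d\sigma}(\sigma)=\sum_{j=1}^m \frac{u_j(s^{-1}(\sigma))}{\sqrt{\sum_k u_k(s^{-1}(\sigma))^2}}\,F_j(\bar x(\sigma)),
\]
so the reparameterized controls $\bar u_j(\sigma)=u_j(s^{-1}(\sigma))/\sqrt{\sum_k u_k(s^{-1}(\sigma))^2}$ satisfy $\sum_j \bar u_j(\sigma)^2=1$ a.e. Finally, to obtain the trajectory defined on $[0,T]$ with constant speed $L/T$, I would apply a further linear rescaling $\sigma=L t/T$, which, because the system is driftless and linear in $u$, multiplies the controls by the scalar $L/T$.

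The main technical obstacle is handling the set $Z=\{t\in[0,T]\mid \sum_j u_j(t)^2=0\}$ where $s$ fails to be strictly increasing. On $Z$ the trajectory is frozen, so we can simply excise $Z$: replace $[0,T]$ by the subset $[0,T]\setminus Z$, collapse the removed intervals, and extend $\bar u$ to a full interval of length $T$ (or $L$) by filling the removed time with the rescaled reparameterization. Concretely, the definition $\bar u_j(\sigma)=u_j(s^{-1}(\sigma))/\sqrt{\sum_k u_k(s^{-1}(\sigma))^2}$ makes sense for a.e.\ $\sigma\in[0,L]$ once we choose any measurable right-inverse of $s$; the corresponding $\bar x$ still satisfies the ODE a.e., and $\bar x(L)=x(T)$. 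The case $L=0$ is trivial, as it forces $x(\cdot)$ to be constant.

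The last step is to check that $\bar u=(\bar u_1,\dots,\bar u_m)$ lies in the admissible class $\mathcal{U}$ and that $U=\R^m$ (or the relevant constraint set) is respected: here $\bar u$ is bounded (of norm $1$ or $L/T$), measurable, and therefore belongs to $\mathcal{U}$ under any reasonable admissibility notion used in the paper (e.g.\ measurable and locally bounded). I expect the reparameterization itself to be short; the delicate part is only the bookkeeping around the zero set $Z$, which is why I would dispose of it first before writing the formulas.
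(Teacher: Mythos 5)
Your proof is correct and follows essentially the same route as the paper: reparameterize by the arc-length function $s(t)=\int_0^t\sqrt{\sum_j u_j(\tau)^2}\,d\tau$ and use the chain rule together with the driftless, control-linear structure to read off the rescaled controls $u_j(\tau(\cdot))\dot\tau(\cdot)$. You are in fact somewhat more careful than the paper, which only records the chain-rule identity and leaves the explicit construction of $\tau$ (and the degenerate set where the speed vanishes) implicit.
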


We recall that, given an admissible trajectory $x:[0,T]\to M$, $T>0$, corresponding to controls $u_j(\cdot)$, $j=1,\ldots,m$,
a {\em reparameterization} of $x(\cdot)$ is a trajectory $\bar x(\cdot)=x(\tau(\cdot))$
with $\tau:[0,{\bar T}]\to[0,T]$ a function such that $\frac{d}{dt}\tau(t)>0$ a.e. on $[0,\bar T]$.
Such trajectory is  defined on $[0,{\bar T}]$.
Lemma~\ref{melenzane}
is
a consequence of the fact that
for a.e. $t\in[0,\bar T]$ we have
\begin{align}
\dot{\bar{x}}(t)&=\frac{d}{dt}x(\tau(t))=\dot x(\tau(t)) \dot\tau(t)\nonumber\\
&=\Big(\sum_{j=1}^m u_j(\tau(t))F_j(x(\tau(t))\Big) \dot\tau(t) \nonumber\\
&=\sum_{j=1}^m \Big(u_j(\tau(t)) \dot\tau(t) \Big)F_j(\bar x(t)),\nonumber
\end{align}
from which it follows that $\bar x(\cdot)$ is admissible and corresponds to  controls $u_j(\tau(\cdot))\dot \tau(\cdot)$,  $j=1,\ldots,m$.

In the following, for the optimal control problem under study, it is  convenient to normalize $T$ in such a way that $u_1(t)^2+u_2(t)^2=1$. Usually, when one makes this choice
the trajectories are said to be {\em parameterized by arc length}. If the objective is to reach the target at time $T'$, it is sufficient to use the controls $(\al\, u_1(\al \, t),\al\, u_2(\al \,t))$, where $\al=\frac{T}{T'}$.

When $T$ is fixed in such a way that $u_1(t)^2+u_2(t)^2=1$, we call problem P$^{{\rm Grushin}}(T)$   simply P$^{{\rm Grushin}}$.

Notice that, as a consequence of Lemma~\ref{melenzane}, if $T$ is not fixed then
P$^{{\rm Grushin}}(T)$ has no solution.
Indeed,  assume by contradiction that $x(\cdot)$, defined on $[0,T_0]$ and corresponding to controls $u_j(\cdot)$, $j=1,2$,  is a minimizer  for $T$ free.
Let $c\in(0,1)$. The trajectory corresponding to controls $\bar u_j(\cdot)=u_j( c\, \cdot)c$, $j=1,2$, is a reparameterization of $x(\cdot)$ reaching the final point at time $\bar T=T_0/c$ with a cost
\begin{align}
\int_0^{\frac{T_0}{c}} \sum_{j=1}^2 u_j(ct)^2c^2 dt&=c\int_0^{T_0} \sum_{j=1}^2 u_j(t)^2 dt,\nonumber\\
&<\int_0^{T_0} \sum_{j=1}^2 u_j(t)^2 dt,\nonumber
\end{align}
which leads to a contradiction.

\begin{proof}[Proof of Proposition~\ref{mufcona}]
Let us define
\begin{align}
L(u(\cdot))&=\int_0^T\sqrt{u_1(t)^2+u_2(t)^2}dt,\nonumber\\
E(u(\cdot))&=\int_0^T(u_1(t)^2+u_2(t)^2)dt.\nonumber
\end{align}
We are going to use the Cauchy--Schwarz inequality
$\langle f,g\rangle ^2\leq\|f\|^2\|g\|^2$ (with equality holding if and only if $f$ and  $g$ are proportional), which holds in any Hilbert space.
This inequality simply tells that the scalar product of two vectors is less than or equal to the product of the norms of the two vectors (with equality holding iff the two vectors are collinear).
In particular, this can be used in the space $L^2([0,T],\mathbb{R})$ of measurable functions $f:[0,T]\to \R$ with $\int_0^Tf(t)^2\,dt<\infty$.  Namely
\begin{align}
\Big(\int_0^T f(t)g(t)\,dt\Big)^2&\leq\int_0^T f(t)^2\,dt \int_0^T g(t)^2\,dt  \nonumber
\end{align}
(with equality holding iff $f \propto g$, a.e.).
 Now, let  $f(t)=\sqrt{u_1(t)^2+u_2(t)^2}$ and $g(t)=1$ for $t\in[0,T]$. Notice that $f,g\in \mathcal{U}\subset L^2([0,T],\mathbb{R})$. We have
 \begin{align}
L(u(\cdot))^2&\leq E(u(\cdot))T \nonumber
\end{align}
 (with equality holding iff $u_1(t)^2+ u_2(t)^2=\mathrm{const}$ a.e.).
  Now, let $u(\cdot)$ be a minimizer of $E$ defined on $[0,T]$. Assume by contradiction that $u_1(t)^2+u_2(t)^2$ is not a.e. constant. Then $L(u(\cdot))^2< E(u(\cdot))T$. Let $\bar x(\cdot)$ be an admissible trajectory defined on $[0,T]$ corresponding to controls satisfying $\sqrt{\bar u_1(t)^2+\bar u_2(t)^2}=L(u(\cdot))/T$ a.e., of which $x(\cdot)$ is a reparameterization as in Lemma \ref{melenzane}.
One immediately checks that $L(u(\cdot))=L (\bar u(\cdot))$.
 For this trajectory we have
$$
E(\bar u(\cdot))T =L(\bar u(\cdot))^2=L(u(\cdot))^2<E(u(\cdot))T,
 $$
contradicting the fact that $u(\cdot)$ is a minimizer of $E$.
\end{proof}

Now we have the following.
\bp
\label{fak-ka}
The problem {\rm P}$^{{\rm Grushin}}$   is equivalent to the problem of minimizing time with  the constraint on the controls $u_1(t)^2+u_2(t)^2\le 1$.
\ep
\begin{proof}
Since for  P$^{{\rm Grushin}}$ we have
$u_1(t)^2+u_2(t)^2=1$
a.e., then
$$
\int_0^T
(u_1(t)^2+u_2(t)^2)dt=T.
$$
Hence P$^{{\rm Grushin}}$ is equivalent to the problem of minimizing $T$ with the constraint on the controls
$u_1(t)^2+u_2(t)^2=1$.  To conclude the proof, let us show that a trajectory corresponding to controls for which the condition
\begin{equation}
u_1(t)^2+u_2(t)^2=1\quad \mathrm{a.e.} \label{tabaccone}
\end{equation}
is not satisfied cannot be optimal for the time-optimal control problem mentioned in the statement.
Actually if \eqref{tabaccone}  is not satisfied then  $L=\int_0^T\sqrt{u_1(t)^2+u_2(t)^2}<T$ and  an arc length  reparameterization of the trajectory reaches the target  in time exactly $L$.
 \end{proof}

Let us now go back to the problem of existence of optimal trajectories for {\rm P}$^{{\rm Grushin}}$.
Thanks to Proposition \ref{fak-ka}, {\rm P}$^{{\rm Grushin}}$ can be equivalently recast as a time-optimal control problem
with controls in the convex and compact set $U=\{(u_1,u_2)\in\R^2\mid  u_1^2+u_2^2\leq1\}$. We can then apply Propositions~\ref{p-teschio-quantico} and~\ref{p-tempomin} and
deduce the existence of an optimal trajectory for
P$^{{\rm Grushin}}$ (as well as P$^{{\rm Grushin}}(T)$ for every $T>0$).

\subsection{Application of the PMP}

}
Before applying the PMP, it is convenient to reformulate the problem in spherical coordinates. Indeed, one can prove the following statement.\\[2mm]
{\bf Claim.} Consider an optimal control problem as in the statement of the PMP (Theorem~\ref{t-PMP}). If all admissible trajectories starting from $\qqin$ are contained in a submanifold of $M$ of dimension strictly smaller than $n$, then each
 admissible
trajectory
has  an abnormal extremal lift.
\\[2mm]
 This property can be qualitatively justified as follows.
We have already mentioned that
 abnormal trajectories
 correspond to singularities of the functional associating with a control law
 the endpoint of the  corresponding controlled trajectory.
 If all admissible trajectories starting from $\qqin$ are contained in a proper submanifold of $M$, then the endpoint functional is everywhere singular, meaning that
 each admissible trajectory
 is abnormal.

As a consequence,  since in our case all trajectories are contained in the sphere $S^2$, if we apply the PMP in $\R^3$ all optimal trajectories admit an abnormal extremal lift. This creates additional difficulties that can be avoided working directly on $S^2$ in spherical coordinates.

Let us introduce the coordinates $(\theta,\varphi)$ as displayed in Fig.~\ref{figgrunew} such that:
$$
x_1=\sin\theta\cos\varphi, \quad
x_2 =\cos\theta,\quad
x_3 =\sin\theta\sin\varphi.
$$
\begin{center}
\begin{figure}
\begin{center}
\includegraphics[scale=0.8]{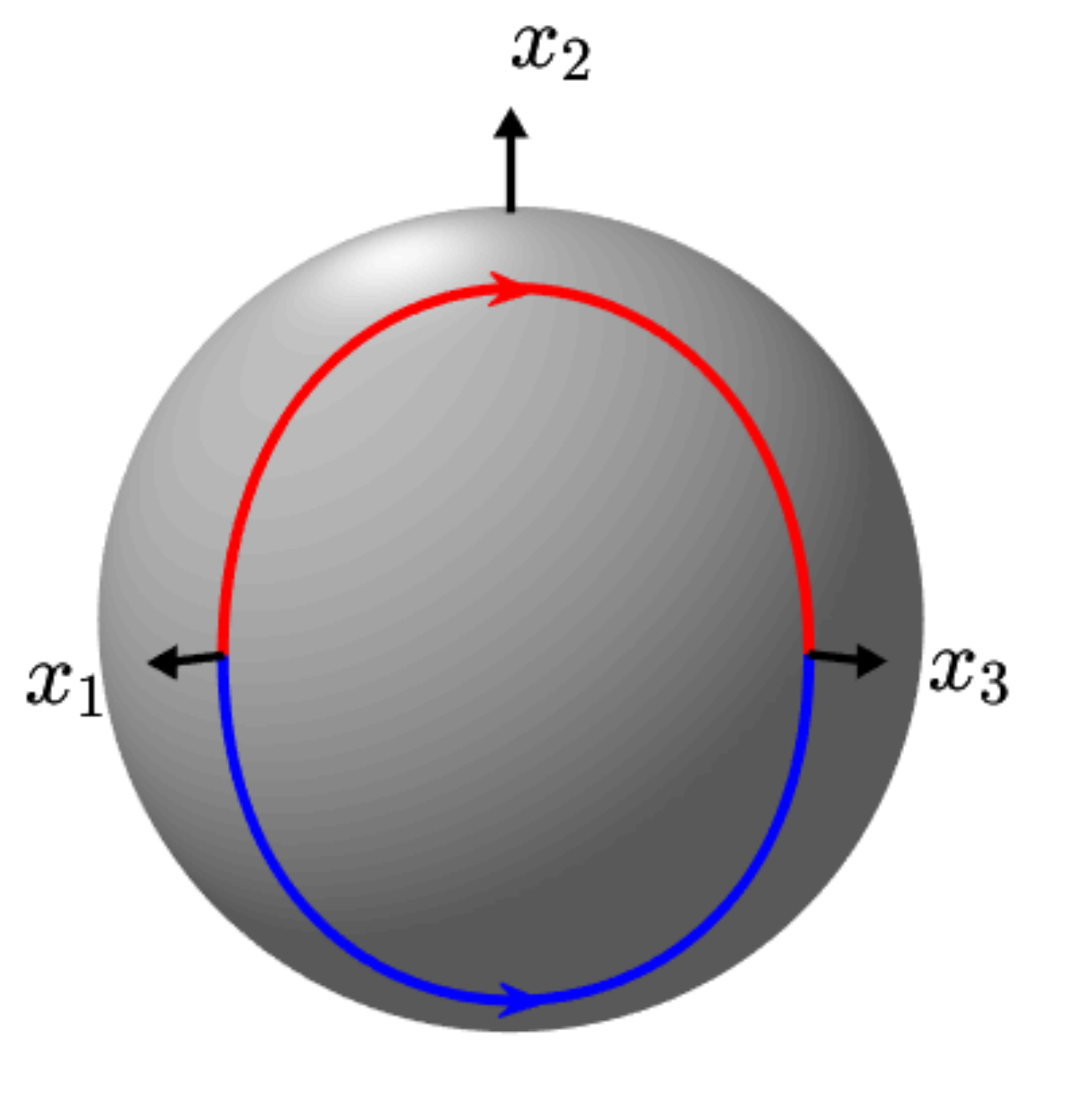}
\end{center}
\caption{(Color online) Picture of the sphere with the spherical coordinates $\theta$ and $\varphi$. \label{figgrunew}}
\end{figure}
\end{center}
In these coordinates, the starting point $x(0)=(1,0,0)$ and the final point $x(T)=(0,0,1)$ become $(\theta,\varphi)(0)=(\pi/2,0)$, $(\theta,\varphi)(T)=(\pi/2,\pi/2)$. Notice that these coordinates are singular for $\theta=0$ and $\theta=\pi$ but such a singularity does not create any problem, as can be checked by using a second system of coordinates around the singularity. The control system takes then the form
\begin{equation}\label{eqthetaphi}
\begin{cases}
\dot{\theta}=-u_1(t)\cos\varphi+u_2(t)\sin\varphi \\
\dot{\varphi}=\cot(\theta)(u_1(t)\sin\varphi+u_2(t)\cos\varphi).
\end{cases}
\end{equation}
It can be simplified by using the controls
$v_1$ and $v_2$ defined by
\begin{equation}
\begin{cases}
v_1=-u_1\cos\varphi+u_2\sin\varphi\\
v_2=u_1\sin\varphi+u_2\cos\varphi,
\end{cases}
\label{etmoide}
\end{equation}
which do not modify the expression of the cost $C$ since $u_1^2+u_2^2=v_1^2+v_2^2$.
The control system becomes now
$$
\left(\ba{c} \dot \theta\\ \dot\varphi\ea\right)=v_1(t)X_1(\theta,\varphi)+v_2(t)X_2(\theta,\varphi),$$
where
\[X_1(\theta,\varphi)=\left(\ba{c} 1
\\ 0\ea\right),~ X_2(\theta,\varphi)=\left(\ba{c} 0\\ \cot(\theta)\ea\right).
\]
Let us now apply the PMP.
Set $q=(\theta,\varphi)$ and let $p=(p_\theta,p_\varphi)$. The pre-Hamiltonian~\eqref{HAMHAMHAM} has the form
\begin{align*}
\HHH(&q,p,v,p^0)\\
&=v_1\, \langle p, X_1(q)\rangle+v_2\, \langle p, X_2(q) \rangle  +p^0(v_1^2+v_2^2)\\
&=v_1p_\theta+v_2 p_\varphi \cot(\theta)+p^0(v_1^2+v_2^2).
\end{align*}
We consider the steps of Section~\ref{secusePMP} first for abnormal ($p^0=0$) and then for normal ($p^0=-\frac12$) extremals.\\

\noindent\textbf{Step 1.} In this step, we have to apply the maximization condition to find the control as a  function of $q$ and $p$.
Since  the controls are unbounded and the Hamiltonian is
concave,
the maximization condition is equivalent to:
\begin{equation}
\begin{cases}
\frac{\partial \HHH}{\partial v_1}(q(t),p(t),v(t),p^0)\equiv0\\
\frac{\partial \HHH}{\partial v_2}(q(t),p(t),v(t),p^0)\equiv0.
\label{femore}
\end{cases}
\end{equation}

For abnormal extremals, we obtain
$$
\begin{cases}
\langle p(t), X_1(q(t))\rangle=p_\theta(t)\equiv0\\
\langle p(t), X_2(q(t))\rangle=p_\varphi(t)\cot(\theta(t))\equiv0.
\end{cases}
$$
These conditions do not permit to obtain the control as a function of $q$ and $p$. Hence, for this problem, abnormal extremals correspond to singular controls. Since $p$ and $p^0$ cannot be simultaneously  zero, the only possibility to have an abnormal extremal is that $\theta(t)\equiv\pi/2$ on $[0,T]$.  In this case, from Eq.~\eqref{eqthetaphi}, we deduce that $\varphi(t)$ should be constant. As a consequence, an abnormal extremal trajectory starting from the initial condition $(\theta,\varphi)(0)=(\pi/2,0)$ will never reach the final condition $(\theta,\varphi)(T)=(\pi/2,\pi/2)$ and we can disregard these trajectories.

For normal extremals, condition \eqref{femore} gives
\begin{equation}
\begin{cases}
v_1(t)=\langle p(t), X_1(q(t))\rangle=p_\theta(t)\\
v_2(t)=\langle p(t), X_2(q(t))\rangle=p_\varphi(t)\cot(\theta(t)).
\label{mandibola}
\end{cases}
\end{equation}
Hence, we obtain the controls as a function of $q$ and $p$ and we can conclude that normal extremals correspond to regular controls.

\noindent\textbf{Step 2.} Let us insert \eqref{mandibola} into the Hamiltonian equations {\bf i)} and {\bf ii)} of Theorem~\ref{t-PMP}. We have  to consider  the case $p^0=-1/2$ only. We obtain:
\begin{align}
\dot \theta(t)&=\frac{\partial\HHH}{\partial p_\theta}(q(t),p(t),v(t),-1/2)\nn\\
&=v_1(t)=p_\theta(t),\label{sr-1}\\
\dot p_\theta(t)&=-\frac{\partial\HHH}{\partial \theta}(q(t),p(t),v(t),-1/2)\nn\\
  &=v_2(t) p_\varphi(t)(1+\cot(\theta(t))^2)\nn\\
  &= p_\varphi(t)^2 \cot(\theta(t))(1+\cot^2(\theta(t))),\label{sr-2}\\
\dot \varphi(t)&=\frac{\partial\HHH}{\partial p_\varphi}(q(t),p(t),v(t),-1/2)\nn \\
&=v_2(t) \cot(\theta(t))= p_\varphi(t) \cot^2(\theta(t)),\label{sr-3}\\
\dot p_\varphi(t)&=-\frac{\partial\HHH}{\partial \varphi}(q(t),p(t),v(t),-1/2)=0.\label{sr-4}
\end{align}
Equation \eqref{sr-4} tells us that  $p_\varphi$ is a constant of the motion, denoted by $a$ in the following. We are then left with the differential equations
\begin{align}
\dot \theta= p_\theta,\quad
\dot p_\theta=a^2 \cot(\theta)(1+\cot^2(\theta)).\label{omero}
\end{align}
Once these are solved, $\varphi$ is obtained by integrating in time equation~\eqref{sr-3} which now has the form
\begin{align}
\dot \varphi= a  \cot^2(\theta).
\label{tibia}
\end{align}
Equations \eqref{omero} and \eqref{tibia} should be solved for every value of $a\in\R$ with the initial conditions
$$
\theta(0)=\pi/2,\quad \varphi(0)=0,\quad p_\theta(0)=\pm1.
$$
The last condition comes from the property that the maximized Hamiltonian is now fixed to $\frac12$ (corresponding to the choice of taking $T$ in such a way that optimal trajectories are parameterized  by arc length). More precisely:
\begin{align}
\frac12={}&\HHH(q(t),p(t),v(t),-1/2) \nn\\
={}&v_1(t) p_\theta(t)+v_2(t) a \cot(\theta(t))\nn\\
&-\frac12(v_1(t)^2+v_2(t)^2)\nn\\
={}&\frac12(p_\theta(t)^2+a^2\cot^2(\theta(t))).\nn
\end{align}
Requiring this condition at $t=0$, one gets  $p_\theta(0)=\pm1$.

The system of equations~\eqref{omero} can be solved using again that the maximized Hamiltonian is equal to $\frac12$ which implies
$$
\dot\theta(t)^2=1-a^2\cot^2(\theta(t)).
$$
Using a separation of variables, we arrive (with the initial condition $\theta(0)=\pi/2$) at
\begin{equation*}
\theta(t)=\begin{cases}
 \arccos\left( \frac{\sin(\sqrt{1+a^2} \,t)}{\sqrt{1+a^2}} \right)&\hspace{-.3cm}\mbox{ if } p_\theta(0)=-1,\\
\pi-\arccos\left( \frac{\sin(\sqrt{1+a^2}\, t)}{\sqrt{1+a^2}} \right)&\hspace{-.3cm}\mbox{ if } p_\theta(0)=1.
\end{cases}
\label{eq-thetafinal}
\end{equation*}
These expressions lead to a simple formula for $x_2(t)$, namely,
\begin{align}
x_2(t)=\pm\left( \frac{\sin(\sqrt{1+a^2} \,t}{\sqrt{1+a^2}} \right).
\label{malleolo-2}
\end{align}
As already explained, the expression for $\varphi$ can be obtained by integrating in time equation \eqref{tibia} using the expression \eqref{eq-thetafinal} with the initial condition $\varphi(0)=0$. The final result is easier expressed in Cartesian coordinates:
\begin{align}
x_1(t)={}&\frac{a \sin (a \,t) \sin \left(\sqrt{1+a^2}
   \,t\right)}{\sqrt{1+a^2}}\nn \\
   &+\cos (a \,t) \cos
   \left(\sqrt{1+a^2} \,t\right)\label{malleolo-1},\\
x_3(t)={}&\sin (a \,t) \cos \left(\sqrt{1+a^2} \,t\right)\nn\\
&-\frac{a
   \sin \left(\sqrt{1+a^2}\, t\right) \cos (a
   \,t)}{\sqrt{1+a^2}}\label{malleolo-3}.
\end{align}

The corresponding  controls can be obtained via formulas  \eqref{sr-1}, \eqref{sr-2}, and \eqref{etmoide}, or
from equation~\eqref{uuu}, providing
\begin{align}
u_1(t)&=-\dot x_1(t)/x_2(t)=\pm\cos(a t),\nn\\
u_2(t)&=\dot x_3(t)/x_2(t)=\mp\sin(a t).\nn
\end{align}

\noindent\textbf{Step 3.} In this step, we have to find the initial covector (i.e., $p_\theta(0)\in\{-1,+1\}$ and $a=p_\varphi\in\R$) whose corresponding trajectory arrives at the final target $(x_1,x_2,x_3)(T)=(0,0,1)$.

From expression~\eqref{malleolo-2}, requiring $x_2(T)=0$ we get $\sin(\sqrt{1+a^2} \,T)= 0$.
Then from \eqref{malleolo-1},  requiring $x_1(T)=0$ we arrive at $\cos(a\,T)=0$.  We have then the conditions
\[
\sqrt{1+a^2} T=n_1 \pi,\quad a T=\frac\pi2+n_2 \pi,
\]
with $n_1,n_2\in\mathbb{Z}$.
Notice that $n_1>0$ since
$T>0$, and hence
$$
\frac{a}{\sqrt{1+a^2}}=\frac{n_2+\frac12}{n_1},
$$
from which we deduce
that $|n_2+\frac12|<n_1$.
It follows that
\begin{equation}
a=\frac{(n_2+\frac12)/n_1}{\sqrt{1-(\frac{n_2+\frac12}{n_1})^2}}.
\label{sfenoide}
\end{equation}
The target is reached at time
\begin{equation}
T=\pi n_1\sqrt{1-\left(\frac{n_2+\frac12}{n_1}\right)^2}.
\end{equation}

\noindent\textbf{Step 4.} The previous step provided a discrete set of trajectories reaching the target. The ones arriving in shorter time corresponds to $n_1=1$ and $n_2=0$  or $n_2=-1$, for which
$T=\pi \frac{\sqrt{3}}{2}$ and $a=\pm1/\sqrt{3}$. The final expression of the optimal trajectories and optimal controls are
\[
\begin{cases}
x_1(t)=\cos^3(t/\sqrt{3})\\
x_2(t)=\pm \frac{\sqrt{3}}{2}\sin(2 t/\sqrt{3})\\
x_3(t)=-\eps\sin^3(t/\sqrt{3}),
\end{cases}
\]
and
\[
\begin{cases}
u_1(t)=\pm\cos(t/\sqrt{3})\\
u_2(t)=\mp\eps\sin(t/\sqrt{3}),
\end{cases}
\]
where $\eps=\pm1$ is the sign of $a$.

We point out that the two trajectories arriving at time $T$ at the point $(0,0,1)$ are those corresponding to $n_2=0$ (i.e., to $a=1/\sqrt{3}$) and to $p_\theta(0)=\mp1$ (i.e., to the sign $+$ in formula \eqref{malleolo-2}), while the two  trajectories arriving at time $T$ at the point $(0,0,-1)$ are associated with $n_2=-1$ (i.e., to $a=-1/\sqrt{3}$).
See Section~\ref{s:excexc1}.
The four trajectories obtained in this way are optimal since we know that optimal trajectories exist and the ones that we have selected are the best among all trajectories satisfying a necessary condition for optimality. In this example, note that we can select
the optimal trajectories by applying the PMP and finding by hand the best extremals, without using second order conditions nor other sufficient conditions for optimality.\\

Figures~\ref{figgru1} and \ref{figgru2} display, respectively, the two symmetric extremal trajectories reaching the target state $(0,0,1)$ at time $T$ and the time evolution of the corresponding controls.
\begin{figure}
\begin{center}
\includegraphics[scale=0.5]{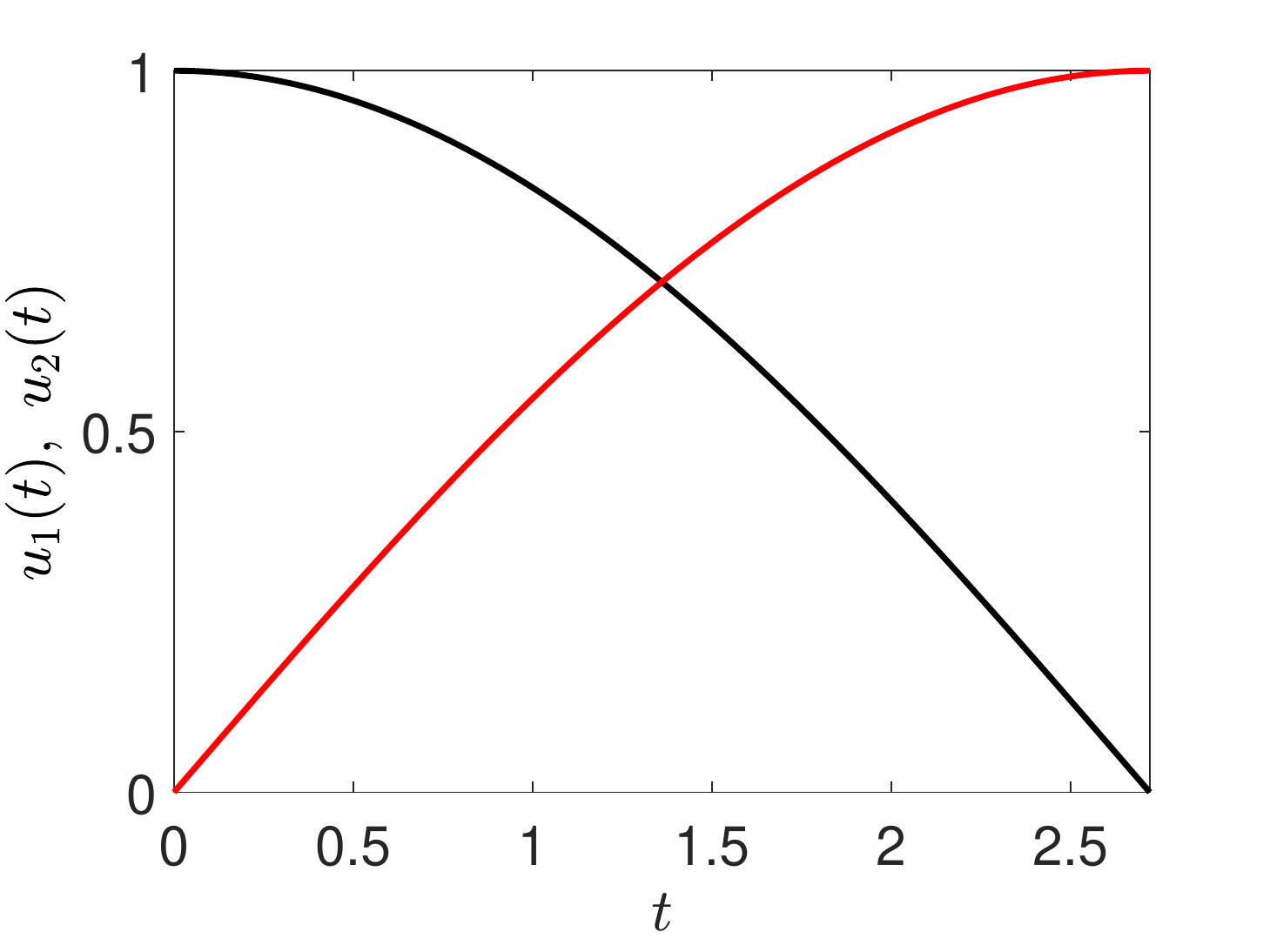}
\end{center}
\caption{(Color online) Plot of the two extremal trajectories (in red and blue) on the sphere going from the point $(1,0,0)$ to the point $(0,0,1)$ and minimizing the cost functional $C$.\label{figgru1}}
\end{figure}
\begin{figure}
\begin{center}
\includegraphics[scale=0.5]{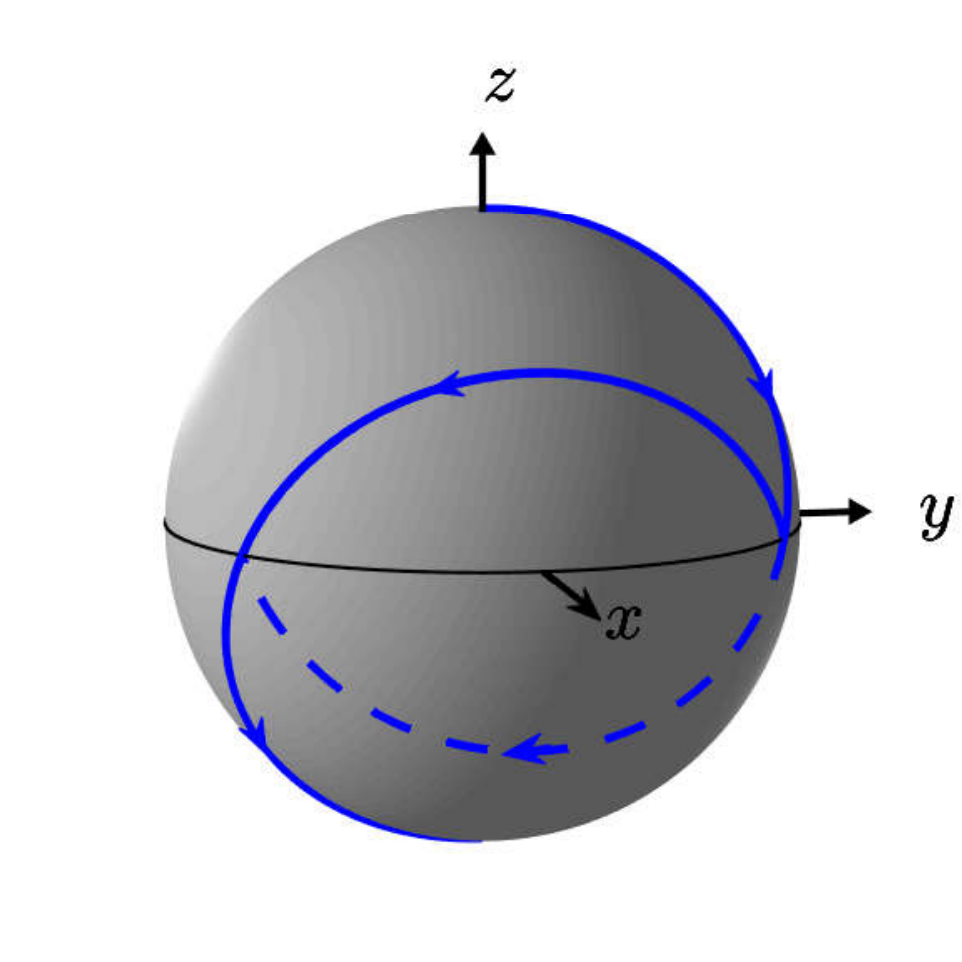}
\end{center}
\caption{(Color online) Time evolution of the controls $u_1$ (in black) and $u_2$ (in red). The control time $T$ and the parameter $a$ are respectively set to $\pi\sqrt{3}/2$ and $-1/\sqrt{3}$.\label{figgru2}}
\end{figure}

\section{Example 2:  A
minimum time
two-level quantum system with a real control}\label{sectwo}
\subsection{Formulation of the control problem}
We consider the control of a spin-1/2 particle whose dynamics are governed by the Bloch equation in a given rotating frame~\cite{ernstbook,levittbook}:
$$
\begin{cases}
\dot{M}_x=-\omega M_y, \\
\dot{M}_y=\omega M_x-\omega_x(t) M_z, \\
\dot{M}_z=\omega_x(t) M_y,
\end{cases}
$$
where $\textbf{M}=(M_x,M_y,M_z)$ is the magnetization vector and $\omega$ the offset term. The system is controlled through a single magnetic field along the $x$ axis that satisfies the constraint $|\omega_x|\leq \omega_{\max}$.  We introduce normalized coordinates $(x,y,z)= \textbf{M}/M_0$, where $M_0$ is the thermal equilibrium magnetization, a normalized control $u =\omega_x/\omega_{\max}$ which satisfies the constraint $|u|\leq 1$, and a normalized time $\tau= \omega_{\max}t$ (denoted $t$ below). Dividing the previous system by $\omega_{\max}M_0$, we get that the time evolution of the normalized coordinates is given by the  equations
$$
\begin{pmatrix}
\dot{x} \\
\dot{y} \\
\dot{z}
\end{pmatrix}
=
\begin{pmatrix}
-\Delta y \\
\Delta x \\
0
\end{pmatrix}
+u(t)
\begin{pmatrix}
0 \\
-z \\
y
\end{pmatrix},
$$
where $\Delta=\frac{\omega}{\omega_{\max}}$ is the normalized offset. The trajectories of the system lie on the Bloch sphere defined by the equation $x^2+y^2+z^2=1$. The manifold $M$ is here the sphere $S^2$. The differential system can be written in a more compact form as
\begin{equation}\label{eqspinq}
\dot{q}=(\mathcal{F}+u\mathcal{G})q,
\end{equation}
where $q=(x,y,z)$ is the state of the system, $u(t)\in U=[-1,1]$,  and $\mathcal{F}$ and $\mathcal{G}$ are the
skew-symmetric $3\times 3$ matrices
$$
\mathcal{F}=\begin{pmatrix}
0 & -\Delta & 0\\
\Delta & 0 & 0\\
0 & 0 & 0
\end{pmatrix}
,\qquad \mathcal{G}=\begin{pmatrix}
0 & 0 & 0\\
0 & 0 & -1\\
0 & 1 & 0
\end{pmatrix}.
$$
The vector fields  $\mathcal{F}q$ and $\mathcal{G}q$ generate rotations around, respectively, the $z$ and the $x$ axis.\\

\noindent\textbf{Existence of time-optimal trajectories.} By Point 1 in Proposition~\ref{p-controllability},
any initial point $q_0$ on the Bloch sphere $M$ can be connected by an admissible trajectory of the control system to any other point $q_1$ on the Bloch sphere (see also \cite{schirmer:2001}). In this case the existence of a time-optimal trajectory connecting $q_0$ to $q_1$ is a direct consequence of Proposition~\ref{p-tempomin} or Proposition~\ref{p-teschio-quantico}. Indeed, $U=[-1,1]$ is compact and the set $\{\mathcal{F}q+u\mathcal{G}q \mid u\in U\}$ is convex for any point $q$ of the Bloch sphere.

\subsection{Application of the PMP}\label{appPMPspin}
The goal of the two control processes that we are considering is to steer the system in minimum time from the north
pole $(0,0,1)$
of the Bloch sphere to, respectively,
the south pole $(0,0,-1)$  or the state $(1,0,0)$. We follow here the results established in \cite{boscain-mason,boscainchitour} (see also \cite{sugny10} for an experimental implementation).

The time-optimal control problem is solved by the application of the PMP. Here the cost $C$ to minimize is
$$
C=\int_0^Tdt=T\to\textrm{min},
$$
where $T$ is free. The pre-Hamiltonian
can be expressed as
$$
\HHH(q,p,u,p_0)=p(\mathcal{F}+u\mathcal{G})q+p^0,
$$
where $p=(p_x,p_y,p_z)\in\mathbb{R}^3$ is the covector
and $p^0$ is a nonpositive constant such that $p$ and $p^0$ are not simultaneously equal to 0. (Here $p$ is seen as a row vector and $q$ as a column vector.) The
value of ${\cal H}$ is constantly equal to $0$ since the final time is free.
The PMP states that the optimal trajectories are solutions of the equations
\begin{eqnarray*}
& & \dot{q}(t)=\frac{\partial \HHH}{\partial p}(q(t),p(t),u(t),p^0),\\
& & \dot{p}(t)=-\frac{\partial \HHH}{\partial q}(q(t),p(t),u(t),p^0), \\
& & \HHH(q(t),p(t),u(t),p^0)=\max_{|v|\leq 1}\HHH(q(t),p(t),v,p^0).
\end{eqnarray*}
The dynamics of the adjoint state $p$ are given by
\begin{equation}\label{sansdag}
\dot{p}(t)=
-p(t)(\mathcal{F}+u(t)\mathcal{G}).
\end{equation}
 Note that $\frac{d}{dt}\|p(t)\|^2=0$, so that $\|p(t)\|$ does not depend on time.
 Its constant value is nonzero since $\HHH=0$ and $(p,p_0)\ne 0$.\\

\noindent\textbf{Steps 1 and 2.}
 Since the only term of  the pre-Hamiltonian depending on the control is
$up\mathcal{G}q$,
the maximization condition of the PMP leads to the introduction of the switching function
$$
\Phi(t)=p(t) \mathcal{G}q(t).
$$
In the regular case in which $\Phi(t)\neq0$, we deduce from the maximization condition that the optimal control is given by the sign of $\Phi$, $u(t)=\textrm{sign}[\Phi(t)]$.
The corresponding trajectory is called a \emph{bang trajectory}. If $\Phi$ has an isolated zero in a given time interval, then the control
function may \emph{switch} from $-1$ to $1$ or from $1$ to $-1$. A \emph{bang-bang trajectory} is a trajectory obtained after a finite number of switches.

\noindent Using the relation~\eqref{sansdag},
we have
\begin{equation}\label{eq:derivative-switching}
\dot{\Phi}(t)=p(t) [\mathcal{G},\mathcal{F}]q(t),
\end{equation}
where $[\cdot,\cdot]$ denotes the matrix commutator operator. In particular, $\Phi$ is a $C^1$ function and, for almost every $t$,
\begin{equation*}
\ddot{\Phi}(t)=
p(t) [\mathcal{F},[\mathcal{F},\mathcal{G}]]q(t)+u(t)p(t) [\mathcal{G},[\mathcal{F},\mathcal{G}]]q(t).
\end{equation*}
Since $[\mathcal{F},[\mathcal{F},\mathcal{G}]]=-\Delta^2 \mathcal{G}$ and $[\mathcal{G},[\mathcal{F},\mathcal{G}]]=\mathcal{F}$,
we have,  for a.e. $t$,
\begin{align}
\ddot{\Phi}(t)&=
-\Delta^2 \Phi(t)+u(t) p(t) \mathcal{F}q(t)\nn\\
&=-\Omega^2
\Phi(t)-p^0 u(t),\label{eq:second-derivative-switching-rivista}
\end{align}
where $\Omega=\sqrt{1+\Delta^2}$ and the second equality follows from the identity $\HHH=0$.

\noindent\textbf{Abnormal extremals.} Abnormal extremals are characterized by the equality $p^0=0$,
from which, together with $\HHH=0$, we deduce that
$\Phi(t)=-p(t)\mathcal{F}q(t)/u(t)$. If $\Phi(t)=0$ at time $t$, then $p(t)$ is orthogonal both to $\mathcal{F}q(t)$ and $\mathcal{G}q(t)$.
If, moreover, $t$ were not an isolated zero of $\Phi$, then
$\dot\Phi(t)=0$, since $\Phi$ is $C^1$. It would follow from
\eqref{eq:derivative-switching} that $p(t)$ is orthogonal also to
$[\mathcal{G},\mathcal{F}]q(t)$. Since
for every $q\in M$ the vectors
$\mathcal{F}q$, $\mathcal{G}q$, and $[\mathcal{G},\mathcal{F}]q$ span the tangent plane $T_q M$ to the sphere $M$,
  we would deduce that $p(t)=0$, contradicting the PMP. This means that abnormal extremals are necessarily bang-bang.
Moreover, we deduce from \eqref{eq:second-derivative-switching-rivista} that
the switching times are the zeros of a nontrivial solution of the equation
\[\ddot\Phi+\Omega^2\Phi=0.\]
The length of an arc between any two successive switching times is then equal to
 $\pi/\Omega$.

\noindent\textbf{Singular arcs.} When the trajectory is normal, there might exist
extremals for which $\Phi$ is zero on a nontrivial time interval. The
control is singular on such an interval, since it cannot directly be obtained from the maximization condition.
The restriction of the trajectory to an interval on which $\Phi\equiv 0$ is called a \emph{singular arc}.
Singular arcs are characterized by the fact that the  time derivatives of $\Phi$ at all orders are zero.
Since $p(t)$ is different from zero, the only possibility to have simultaneously $\Phi(t)=0$ and $\dot{\Phi}(t)=0$ is that the vectors $\mathcal{G}q(t)$ and $[\mathcal{G},\mathcal{F}]q(t)$ are parallel. Since $\mathcal{G}q(t)$ and $[\mathcal{G},\mathcal{F}]q(t)$ generate, respectively, the rotations around the $x$ and the $y$ axis, we deduce that singular arcs are contained in the equator $z=0$ of the sphere. The singular control law
 $u_s$ can be calculated from \eqref{eq:second-derivative-switching-rivista} by imposing that $\Phi$ and its second time derivative are zero, yielding
 $u_s(t)=0$.
As it could be expected, this control law
generates a rotation along the equator. It is admissible because $|u_s|\leq 1$.

\noindent\textbf{Normal bang-bang extremals.} Consider a normal extremal and
an
interior bang arc of duration $T$ between the switching times $t_0$ and $t_0+T$ on which the control $u$ is constantly equal to $+1$ or $-1$.  An  interior
arc is an arc which is neither at the beginning nor at the end of the extremal.
We normalize $p^0$ to $-1$.
According to \eqref{eq:second-derivative-switching-rivista}, the function $\Psi=\Phi-\frac{u}{\Omega^2}$ is
a solution of $\ddot \Psi+\Omega^2 \Psi=0$.
Moreover, since $\Phi$ is non-constant then $\Psi$ is nontrivial.
Hence, $\Psi(t)=\nu \cos(\Omega t+\theta_0)$ for some $\nu>0$ and $\theta_0\in \mathbb{R}$.
Moreover, $\nu>0$ is uniquely identified by $u$ and $\dot\Phi(t_0)$ through the equalities $\Psi(t_0)=-\frac{u}{\Omega^2}$ and $\dot \Psi(t_0)=\dot \Phi(t_0)$.
Switchings occur if $\Phi=\Psi+\frac{u}{\Omega^2}$ vanishes and changes sign.
Since, moreover, $\mathrm{sign}[u]=\mathrm{sign}[\Phi]$,
it follows that $\Psi$ is larger than the negative value $-\frac{1}{\Omega^2}$ on $(t_0,t_0+T)$ when $u=+1$ and smaller
than the positive value $\frac{1}{\Omega^2}$ on $(t_0,t_0+T)$ when $u=-1$.
Hence $T$ is larger than $\pi/\Omega$ and
\[\dot \Phi(t_0+T)= -\dot \Phi(t_0).\]
If $\dot \Phi(t_0)=0$, we deduce that $T=2\pi/\Omega$.
Notice that if $u$ is constantly equal to $+1$ or $-1$, then the
solutions of $\dot q=(\mathcal{F}+u\mathcal{G})q$
are $2\pi/\Omega$ periodic rotations around the axis spanned by $(u,0,\Delta)$. Since a time-optimal trajectory cannot self-intersect, we conclude that
$T<2\pi/\Omega$ and $\dot \Phi(t_0)\ne 0$.

If $t_0+T$ is the starting time of another internal bang arc, then by the above considerations the duration of such internal bang arc is also equal to $T$. Given a normal bang-bang trajectory, there exists then
 $T\in (\pi/\Omega,2\pi/\Omega)$ such that the trajectory
is the concatenation of bang arcs of duration $T$, except possibly for the first and last bang arc, whose length can be smaller than $T$.

\noindent\textbf{General extremals.} As we have seen in the previous paragraphs, if a trajectory contains an internal bang arc, then it is bang-bang.
Otherwise the set of zeros of $\Phi$ is connected, that is, either $\Phi$ has a single zero or it vanishes on a nontrivial singular arc and is different from zero out of it.

To summarize, extremal trajectories are of two types:
\begin{itemize}
\item Bang-bang trajectories whose internal bang arcs have all the same length $T\in [\pi/\Omega,2\pi/\Omega)$ (the case $T=\pi/\Omega$ corresponding to abnormal extremals) and for which the first and last bang arcs have length at most $T$.
\item Concatenations of a (possibly trivial) bang arc of length smaller that $2\pi/\Omega$, a singular arc on which $u_s=0$, and another (possibly trivial) bang arc of length smaller than $2\pi/\Omega$.
\end{itemize}

 \noindent\textbf{Steps 3 and 4.} We solve in this paragraph two time-optimal control problems. Starting from the north pole $(0,0,1)$, the goal is to reach
in minimum time the points $(0,0,-1)$ (problem (P1)) and $(1,0,0)$ (problem (P2)). To simplify the derivation of the optimal solutions, we assume that $|\Delta|\leq 1$~\cite{boscain-mason}.

Before solving (P1) and (P2), we first derive analytical results describing the dynamics of the system. Consider a bang extremal trajectory starting from the north pole at time $t=0$ with control $u(t)=\varepsilon=\pm 1$. The corresponding trajectory is given by
$$
\begin{cases}
x(t)=\frac{\varepsilon \Delta}{\Omega^2}(1-\cos(\Omega t)) \\
y(t)=-\frac{\varepsilon}{\Omega}\sin(\Omega t)\\
z(t)=1+\frac{1}{\Omega^2}(\cos(\Omega t)-1).
\end{cases}
$$
The first two times for which $z(t)=0$ are $t_1=\frac{1}{\Omega}(\pi-\arccos(\Delta^2))$ and $t_2=\frac{1}{\Omega}(\pi+\arccos(\Delta^2))$.
Notice that all other times for which $z(t)=0$ are larger than $2\pi/\Omega$ and cannot be the duration of a bang arc of an optimal trajectory.

The optimal solution of (P1) is a bang-bang trajectory with a first switch on the equator at $t=t_1$. The total duration of the process is $t_1+t_2=\frac{2\pi}{\Omega}$. A symmetric configuration is possible with a first switch at $t=t_2$. The two trajectories are displayed in Fig.~\ref{figspin2}.
Let us discuss how the optimality of such a trajectory can be asserted.
The proposed trajectory is clearly extremal and connects the chosen initial and final points. Since a bang-bang trajectory with at least two internal bang arcs has duration larger than $2\pi/\Omega$, it follows that any bang-bang trajectory with four or more bang arcs has a duration larger than the candidate optimal trajectory (that is, $\frac{2\pi}{\Omega}$). One is then left to compare $\frac{2\pi}{\Omega}$
with finitely many types of trajectories: bang-bang trajectories with two or three bang arcs, and trajectories obtained by concatenation
of a bang, a singular, and a bang arc. By setting the initial and final points, this leaves  few competitors to the optimal trajectory, which can be easily excluded by enumeration. As an example, we consider the concatenation of a bang arc with $u=-1$, a singular arc and a new bang arc with $u=+1$. The two bang arcs
last for a time $t_1$, while the duration of the singular arc is $\frac{2}{\Delta}\arctan(\frac{\sqrt{1-\Delta^2}}{\Delta})$. It is then straightforward to deduce that the duration of this extremal solution is larger than $t_1+t_2=\frac{2\pi}{\Omega}$, except in the case $\Delta=1$ for which the singular arc is of length zero.

\begin{figure}
\begin{center}
\includegraphics[scale=0.8]{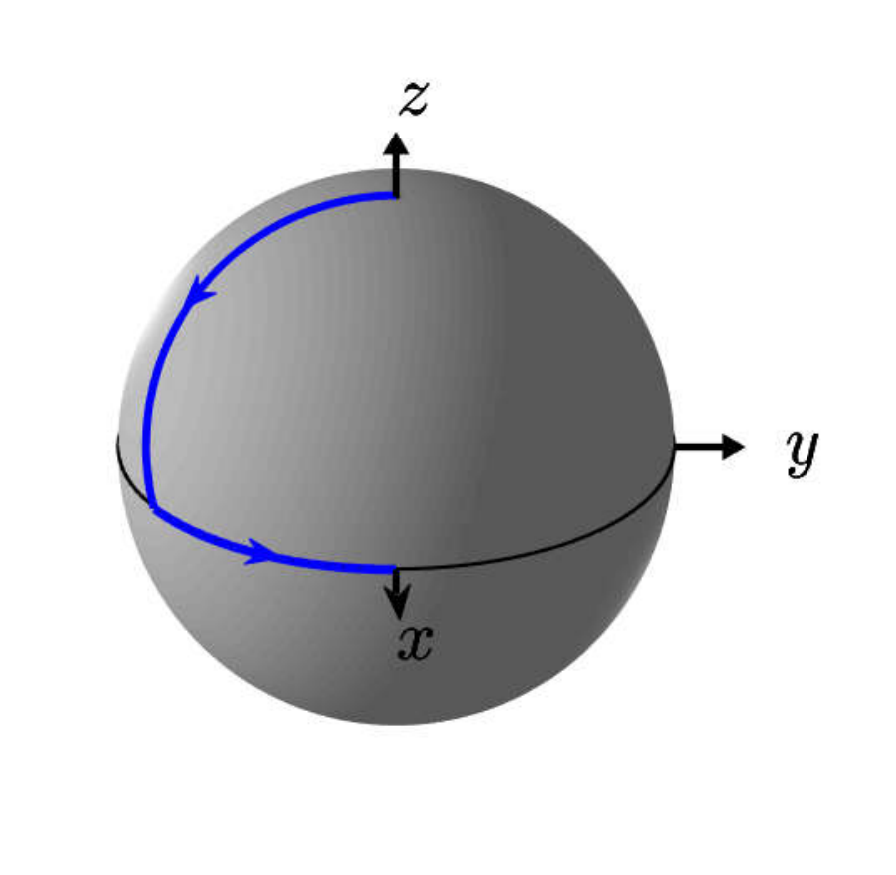}
\end{center}
\caption{(Color online) Optimal trajectories (in blue) going from the north pole to the south pole of the Bloch sphere. The solid black line indicates the position of the equator. The parameter $\Delta$ is set to $-0.5$.
 For one of the two optimal trajectories
a control $u=-1$ is first applied during a time $t_1$,
followed by a control $u=+1$ during a time $t_2$, while for the other trajectory $u=-1$ lasts for a time $t_2$ and $u=+1$ for a time $t_1$.\label{figspin2}}
\end{figure}

Let us now discuss the  solution of (P2).
Using the results of Sec.~\ref{appPMPspin}, we consider the concatenation of a bang extremal with $u=+1$ during the time $t_1$ and of a singular extremal with $u=0$ during the time $t_s$. At time $t=t_1$, the trajectory reaches the point $(\Delta,-\sqrt{1-\Delta^2},0)$. We deduce that $t_s=\frac{1}{\Delta}\arctan(\frac{\sqrt{1-\Delta^2}}{\Delta})$. The total duration of the control process is $t_1+t_s$. The corresponding
trajectory is represented in Fig.~\ref{figspin1}.
One can check that all other candidates for optimality
join $(1,0,0)$ in a longer time.
The situation is more complicated than for (P1), since here $t_s\to \infty$ as $\Delta\to 0$, so the candidate trajectory should be compared with trajectories with more and more bangs as $\Delta\to 0$.
A proof of the optimality of the trajectory described above can be obtained, for instance, using optimal synthesis theory, i.e., describing all the optimal trajectories starting from the north pole, as done in~\cite{boscain-mason}.

\begin{figure}
\begin{center}
\includegraphics[scale=0.9]{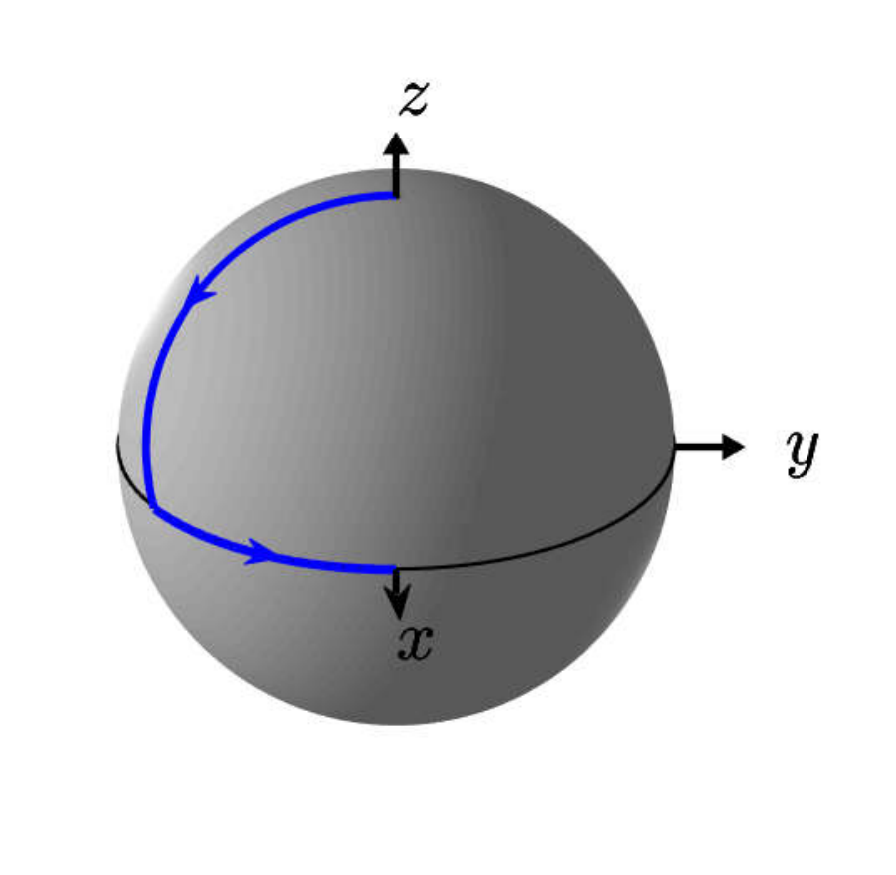}
\end{center}
\caption{(Color online) Optimal trajectory (in blue) going from the north pole to the point $(1,0,0)$ of the $x$ axis. The solid black line indicates the position of the equator. The parameter $\Delta$ is set to 0.5.\label{figspin1}}
\end{figure}

\section{Applications of quantum optimal control theory}\label{applioct}
We discuss in this section the link between the material and results presented in this tutorial and the current research objectives in this field, both from a theoretical and an experimental points of view. The ability to quickly and accurately perform operations in a quantum device is a key task in quantum technologies. Quantum optimal control theory addresses this challenge by combining analytical and numerical tools to design procedures adapted to the experimental setup under study. This approach has the key advantage of being based on a rigorous theoretical framework on which all developments have been based since the 1980s. In the case of a low-dimensional quantum system as the two examples presented in Sec.~\ref{secthree} and \ref{sectwo}, the optimal control problem may be solved analytically or at least with a very high numerical accuracy. For high-dimensional systems, an efficient alternative is provided by numerical optimal control procedures, which include first and second-order gradient ascent algorithms~\cite{KHANEJA2005,machnes2011,fouquieres2011,ciaramella2015} whose structure is based on the PMP. The connection between the PMP and gradient-based optimization algorithms is described in Sec.~\ref{sectiongradient}. Their flexibility allows them to adapt to many experimental situations for which precise modeling of the dynamics is known. As concrete and recent examples, we mention the experimental implementation of such techniques in Rydberg atoms~\cite{omran2019,koch20}, in spin-wave states of atomic ensembles~\cite{he2020}, in electron spin resonance~\cite{Probst2019} or in superconducting cavity resonator~\cite{heeres2017}. The diversity of these examples shows the key role that optimal control now plays in quantum technologies. We describe below some examples of recent applications which are based on the mathematical formalism described in this tutorial. We also indicate some open issues in this field. We stress that our aim is not to provide an exhaustive list of all optimal control studies in this very active area (we refer the interested reader to recent reviews on this subject~\cite{cat,altafini-ticozzi,dong,past-present-future}), but rather to give an overview of the different aspects that can be analyzed.

The analytical techniques have been applied recently to a series of fundamental and practical issues in quantum technologies such as robustness~\cite{dridi20,vandamme17,barnes1,barnes2,barnes3,li2017,martikyan20} and selectivity~\cite{vandamme18,haidong20} of the control process with respect to parameters of the system Hamiltonian. The robustness property is a key factor for the experimental implementation of open-loop optimal pulses. The basic idea consists in controlling simultaneously an ensemble of identical systems which differ only by the value of the unknown parameter. Robust or selective optimal pulses are obtained respectively in the case where the target states are the same or are different according to the systems. In the same direction, numerical optimal control has been also employed to improve the estimation of Hamiltonian parameters~\cite{haidong2015,ansel2017}. Here, from a specific cost functional, the control is used to optimize the discrimination between the system dynamics, which leads to a precise characterization of the Hamiltonian~\cite{hou2019}. Using the same type of approach, optimal control has proven to be very effective for quantum sensing~\cite{poggiali2018,calarcoreview,Konzelmann2018}.

A key aspect of quantum technologies is the minimum time to perform a given operation. Lower bounds on the time can be established in the framework
of quantum speed limits~\cite{Deffner_2017,frey} where the time is expressed as a ratio between the distance to the target state
and the dynamical speed of evolution. Time-optimal trajectories correspond by definition to the speed limit of the processes. The solutions of the two examples solved in this tutorial give therefore the quantum speed limits of the problems under study. However, in a higher dimensional quantum system, the many local minima of the control landscape make it very difficult to find a good approximation of the optimal trajectory and lead generally to an upper bound of the minimum time. Recent studies have explored the link between these two formalisms~\cite{hegerfeldt2013,caneva}. A major step forward would be to rigorously describe the concept of speed limit in terms of the Pontryagin Maximum Principle~\cite{gajdacz2015,diaz20}.

Quantum thermodynamics is nowadays a very useful tool to describe the dynamics of open quantum systems used in quantum technologies~\cite{kosloff2013}. Optimal control and the PMP can be applied to analyze many issues in this field such as the enhancement of quantum engine performances~\cite{cavina2017,cavina2018}, work extraction~\cite{manzano2018,miller2019} or qubit purification~\cite{basilewitsch2017,fischer19,basilewitsch2020}. The quantification of the ressources to control a quantum system has been also discussed and could be optimized from the PMP~\cite{abah2019}.

An intense effort is being made to develop new quantum optimal control algorithms, which are well suited to experimental limitations and constraints~\cite{egger2014,wilhem2020,mintert2020,mintert2020_2,dridi15} or which are particularly effective in designing optimal pulses~\cite{machnes2018,kliclstein2017,jun2017}. In particular, hybrid optimal control using feedback from the experiment has been developed to avoid problems due to inaccuracies in the modeling of the dynamics. The structure of the control landscape for the preparation of quantum many-body systems has been identified and exhibits a spin-glass-like phase transition in the space of control protocols close to the optimal fidelity~\cite{day}. This structure explains the difficulty to find numerically the optimal solution in this region. Combining optimal control and machine learning approaches as proposed in~\cite{bukov2018} could be a way to solve this problem and to improve the efficiency of optimal algorithms~\cite{niu2019,perrier2020}. However, we stress that a rigorous mathematical description of these numerical methods and results will be necessary in the future to systematically apply these approaches to other quantum devices.

We conclude this section by pointing out that the PMP may also play a more unexpected role in quantum computing. It has recently been shown that Grover's quantum search problem can be mapped to a time-optimal control problem, and then described through the PMP~\cite{lin2019}. A connection between optimal control theory and variational quantum algorithms~\cite{VQA} has been established in~\cite{yang2017}. A general description of this intrinsic link is given in~\cite{rabitz2020} where optimal control is used to precisely adjust the parameters of a quantum circuit. These results highlight that the PMP as a general optimization tool is not only interesting for the computation of time-dependent control pulses, but also in other optimization problems of interest in quantum technologies.

\section{Conclusion}\label{conclu}
In this tutorial, we have attempted to give the reader a minimal background on the mathematical techniques of OCT.
In our opinion,  this
is a fundamental prerequisite to rigorously and correctly apply these tools in quantum control.

The objectives of the tutorial are twofold. First, we have highlighted the key concepts of the PMP using ideas based on the finding of extrema of functions of several variables. This analogy gives non-experts an intuition of the tools of optimal control that might seem abstract on first reading. We have then stated the PMP and described in details the different steps to be followed in order to solve an optimal control problem. Some are rarely discussed in quantum control, such as the existence of solutions or abnormal and singular extremals, while they play a crucial role in some problems. The link between the PMP and gradient-based optimization algorithms has been explained, which also highlights the role of such mathematical tools in any numerical optimization process. Second, we have solved two basics control problems, namely the control of a three-level quantum system by means of two complex resonant fields and the control of a spin 1/2 particle through a real off-resonance driving. The low dimension of the two systems allows us to express analytically the optimal solutions and to give a complete geometric description of the control protocol. Such examples can be used as a starting point by the reader to apply the PMP to more complex quantum systems. A series of recent results using OCT has been briefly discussed to highlight current research directions in this area.

\section{List of notations}\label{notations}
We provide in Tab.~\ref{tabres} a list of the main notations used in the paper as well as the first section where  they appear or are defined.

\begin{table}[tb]
\caption{Main notations used in the paper.\label{tabres}}
\begin{tabular}{|c|c|c|}
\hline
{\bf Notation}& Definition & Section \\
\hline\hline
$q(t)$ & the state of the system at time $t$ & Sec.~\ref{secformu} \\
\hline
$f$ & a smooth vector-valued function of $q$ and $u$ which defines the control system & Sec.~\ref{secformu} \\
\hline
$f_0$ & a smooth scalar-valued function of $q$ and $u$ which defines the cost functional & Sec.~\ref{secformu} \\
\hline
$M$ & the manifold on which the state $q(t)$ evolves & Sec.~\ref{secformu} \\
\hline
$T$ & the final control time & Sec.~\ref{secformu} \\
\hline
$u(t)$ & the value of the control law at time $t$ & Sec.~\ref{secformu} \\
\hline
$U$ & the set of possible values of $u(t)$ & Sec.~\ref{secformu} \\
\hline
$\mathcal{U}$ & the set of admissible control laws & Sec.~\ref{secformu} \\
\hline
$\mathcal{T}$ & the set of target states & Sec.~\ref{secformu} \\
\hline
$d(\mathcal{T},q(T))$ & the distance between $\mathcal{T}$ and the final state $q(T)$
& Sec.~\ref{secformu} \\
\hline
$\mathcal{R}(q_{\textrm{in}})$ & the set of reachable states from $q_{\textrm{in}}$ & Sec.~\ref{secformu} \\
\hline
$\mathcal{R}^T(q_{\textrm{in}})$ & the set of reachable states in time $T$ from $q_{\textrm{in}}$ & Sec.~\ref{secformu} \\
\hline
$\textbf{F}(q)$ & the set of directions $f(q,u)$ at $q$ as $u$ varies in $U$ & Sec.~\ref{s-existence} \\
\hline
$\phi(q(T))$ & the terminal cost which depends on the final state $q(T)$ & Sec.~\ref{s-existence} \\
\hline
$J(u(\cdot))$ & the value of the cost for a control $u(\cdot)$ & Sec.~\ref{s-first} \\
\hline
$\mathcal{H}$ & the pre-Hamiltonian in the PMP & Sec.~\ref{s-first}\\
\hline
$p$ & the adjoint state in the PMP & Sec.~\ref{s-first}\\
\hline
$p^0$ & the abnormal multiplier & Sec.~\ref{s-first}\\
\hline
\hline
\end{tabular}
\end{table}

\appendix

\section{Test of controllability}\label{testcontrollability}
We give in this section a general and useful sufficient condition for controllability. This section completes the discussion of Sec.~\ref{secformu}.
\bp\label{p-controllability}
Let $M$ be a smooth manifold and $U$ be a subset of $\R^m$ containing a neighborhood of the origin.
Consider a control system
of the form $\dot q=F_0(q)+\sum_{j=1}^m u_j(t) F_j(q)$, with
$F_0,\dots,F_m$ smooth vector fields on $M$ and
$u(t)=(u_1(t),\dots,u_m(t))\in U$.
Let ${\cal L}_0$ be the Lie algebra generated by the vector fields $F_0,\dots,F_m$ and
${\cal L}_1$ be the Lie algebra generated by the vector fields $F_1,\dots,F_m$.
The system is controllable if at least one of the following conditions is satisfied:
\begin{enumerate}
\item $F_0$ is a recurrent vector field and $\dim({\cal L}_0(q))$ is equal to the dimension of $M$ at every $q\in M$;
\item $U=\R^m$ and  $\dim({\cal L}_1(q))$ is equal to the dimension of $M$ at every $q\in M$.
\end{enumerate}
\ep
For a definition of Lie algebra generated by a set of vector fields and for the notion of recurrent vector field we refer to \cite{agrachev-book,chapter}. For our purposes, it is sufficient to recall that:\\
- if $F_0(q)=A_0q,\dots,F_m(q)=A_mq$
are linear vector fields then $G\in {\cal L}_0$ if and only if $G(q)=Bq$ with $B$ in the matrix Lie algebra (for the commutator product) generated by $A_0,\dots,A_m$;\\
-  if a vector field $F$ is such that every solution of $\dot q=F(q)$ is periodic, then $F$ is recurrent.\\
With these sufficient conditions, we recover the standard controllability conditions introduced in quantum control for closed systems~\cite{albertini-dalessandro,schirmer:2001} and recalled in Sec.~\ref{secformu}.

\section{Filippov's theorem}\label{filippovtheorem}
We state here Filippov's theorem, which gives a sufficient condition
for the compactness of the reachable set~\cite{liberzon-book}. This section completes the discussion of Sec.~\ref{s-existence}.
\bt[Filippov]
\label{t-filippov}
 Fix $T>0$. Consider the control system  $\dot \qq(t)= f(\qq(t),u(t)),$ $\qq\in M$,  $u\in \mathcal{U}$, where $M$ is a $n$-dimensional manifold and  $U\subset \R^m$.
 Fix an initial condition $\qqin\in M$.
 Assume the following conditions:
\begin{itemize}
\item the set $U$ is compact,
\item the set {\bf F}$\displaystyle (\qq)=\Big\{f(
\qq ,u)\mid u\in U \Big\}$ is convex for every $\qq\in M$,
\item for every $u\in \mathcal{U}$, the solution of $\dot \qq(t)= f(\qq(t),u(t)),$ $q(0)=\qqin$, is defined on the whole interval $[0,T]$.
\end{itemize}
Then
the sets ${\cal R}^T(\qqin)$ and ${\cal R}^{\leq T}(\qqin)$ are compact.
\et
Here
${\cal R}^{\leq T}(\qqin)$ denotes the reachable set from $\qqin$ within time $T$, i.e.,
the set of points $\bar\qq$ in $M$ such that there exist $T'\in[0,T]$
 and an admissible trajectory  $q:[0,T']\to M$  such that  $q(0)=\qqin$ and $q(T')=\bar\qq$.

Note that the third hypothesis of Th.~\ref{t-filippov}
is automatically satisfied when $M$ is compact.

By applying Filippov's theorem to the augmented system for problem {\bf (P1)}
introduced in Sec.~\ref{s-existence}, one obtains:
\begin{prop}\label{p-teschio}
Fix $T>0$. Assume that
\begin{itemize}
\item  $\target$ is closed and ${\cal R}^T(\qqin)\cap\target\neq\emptyset$,
\item the set $U$ is compact,
\item  the set $\hat {\bf F}(\qq)=\left\{  \left(\ba{c}f^0(\qq,u)\\ f(q,u)\ea \right) \mid u\in U \right\}$ is convex for every $q\in M$,
\item   for every $u\in \mathcal{U}$ the
solution of $\dot {\qq}(t)=f(\qq(t),u(t))$, $q(0)=\qqin$, is defined on the whole interval $[0,T]$.
\end{itemize}
Then there exists a solution to  problem {\bf P1}.
\end{prop}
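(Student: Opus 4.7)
The plan is to reduce Proposition~\ref{p-teschio} to a direct application of Filippov's theorem (Theorem~\ref{t-filippov}) to the augmented system $\dot{\hat q}=\hat f(q,u)$ with initial condition $\hat q(0)=(0,\qqin)$, followed by the observation already recorded in the text: once $\hat{\cal R}^T(0,\qqin)$ is compact and its intersection with the closed set $\R\times\target$ is nonempty, a minimizer of the coordinate $q^0(T)$ exists and yields an optimal trajectory for problem \textbf{(P1)}.

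First I would verify each of the three hypotheses of Theorem~\ref{t-filippov} for the augmented system. Compactness of $U$ is assumed directly. Convexity of the augmented velocity set $\hat{\bf F}(q)=\{(f^0(q,u),f(q,u))\mid u\in U\}$ is the third bullet of the proposition. The delicate point is the third hypothesis of Filippov's theorem, namely that for every $u\in\mathcal{U}$ the augmented Cauchy problem has a solution defined on the whole interval $[0,T]$. The $q$-component is defined on $[0,T]$ by assumption. Since $q(\cdot)$ is then continuous on the compact interval $[0,T]$, its image is compact in $M$; combined with compactness of $U$ and smoothness (hence continuity) of $f^0$, the integrand $s\mapsto f^0(q(s),u(s))$ is bounded and measurable, so that $q^0(t)=\int_0^t f^0(q(s),u(s))\,ds$ is well-defined and finite on all of $[0,T]$. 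Thus $\hat q(\cdot)$ is defined on $[0,T]$ as required, and Filippov's theorem yields compactness of $\hat{\cal R}^T(0,\qqin)$.

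Once compactness is in hand, I would finish as follows. The hypothesis ${\cal R}^T(\qqin)\cap\target\ne\emptyset$ lifts immediately to $\hat{\cal R}^T(0,\qqin)\cap(\R\times\target)\ne\emptyset$, since any trajectory $q(\cdot)$ reaching $\target$ produces the augmented trajectory $(q^0(\cdot),q(\cdot))$, whose endpoint lies in $\R\times\target$. Closedness of $\target$ gives closedness of $\R\times\target$, so the intersection $K:=\hat{\cal R}^T(0,\qqin)\cap(\R\times\target)$ is a nonempty, compact subset of $\R\times M$. The continuous projection $\pi\colon(q^0,q)\mapsto q^0$ therefore attains its minimum on $K$ at some point $(\bar q^{\,0},\bar q_T)$. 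By definition of the augmented reachable set there is an admissible control $\bar u\in\mathcal{U}$ whose associated augmented trajectory ends at $(\bar q^{\,0},\bar q_T)$; the projection of this trajectory onto $M$ is admissible for \textbf{(P1)}, satisfies $\bar q(T)\in\target$, and has cost $\int_0^T f^0(\bar q(t),\bar u(t))\,dt=\bar q^{\,0}$, which is the minimum of the cost over all admissible trajectories reaching $\target$ by construction.

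The main obstacle is the completeness hypothesis of Filippov applied to the augmented state: we are given global existence on $[0,T]$ only for the original $q$-equation, and must upgrade it to the pair $\hat q$. The argument above handles this by exploiting that $q(\cdot)$ is already known to exist on the full interval, so its image is compact, making $f^0$ bounded along the trajectory and preventing blow-up of $q^0$. Everything else is essentially bookkeeping: rephrasing the given hypotheses in terms of the augmented system, invoking Filippov, and reading off the minimizer from the compact set $K$.
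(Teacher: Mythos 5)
Your proposal is correct and follows exactly the route the paper takes: apply Filippov's theorem (Theorem~\ref{t-filippov}) to the augmented system to get compactness of $\hat{\cal R}^T(0,\qqin)$, then minimize the $q^0$-coordinate over the nonempty compact intersection with $\R\times\target$, which is the content of the earlier proposition in Sec.~\ref{s-existence}. Your explicit verification that global existence for $q(\cdot)$ upgrades to global existence for the augmented state $\hat q$ (via boundedness of $f^0$ along the compact image of the trajectory) is a detail the paper leaves implicit, and it is handled correctly.
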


\begin{example}
{ The conclusion of
Proposition~\ref{p-teschio} may fail to hold if we drop the convexity assumption on the set {\bf F}$\displaystyle (\qq)$ of admissible velocities,
as illustrated by the following example.
Take $M=\R^2$, $U=\{-1,1\}$, and $f(q,u)=A_u q$ with $A_1=\left(\begin{smallmatrix}-1&1\\-1&0\end{smallmatrix}\right)$ and $A_{-1}=\left(\begin{smallmatrix}-1&-1\\1&0\end{smallmatrix}\right)$. We point out that the corresponding control system satisfies all the assumptions of Th.~\ref{t-filippov} except for the convexity of {\bf F}$\displaystyle (\qq)$, since the control can only take two discrete values $-1$ and $1$.
 Pick $\qqin=(1,0)$ and any $T>0$. Then $q_T:=e^{-T}\qqin$ is in the closure of ${\cal R}^{T}(\qqin)$, since we can end up arbitrarily close to $q_T$ at time $T$ by applying a control
that switches fast enough between $-1$ and $1$. On the other hand, if $\dot q(t)=f(q(t),u(t))$ at time $t$ and $q(t)$ has a nonzero vertical component, then $\frac{d}{dt}\|q(t)\|>-\|q(t)\|$. Since every trajectory of the control system starting at $\qqin$ necessarily leaves the horizontal axis, its final point  at time $T$ has norm larger than $e^{-T}$. Consequently $q_T\not\in    {\cal R}^{T}(\qqin)$. This proves that $ {\cal R}^{T}(\qqin)$ is not closed, hence the conclusion of Th.~\ref{t-filippov} does not hold. This result is illustrated in Fig.~\ref{figswitch} which shows numerically that the distance to the target state tends to zero when the number of switches goes to $+\infty$. This example also highlights possible numerical problems when the existence of the optimal solution is not verified. Here, an optimization algorithm cannot converge towards a well-defined control law.}
\begin{figure}
\begin{center}
\includegraphics[scale=0.5]{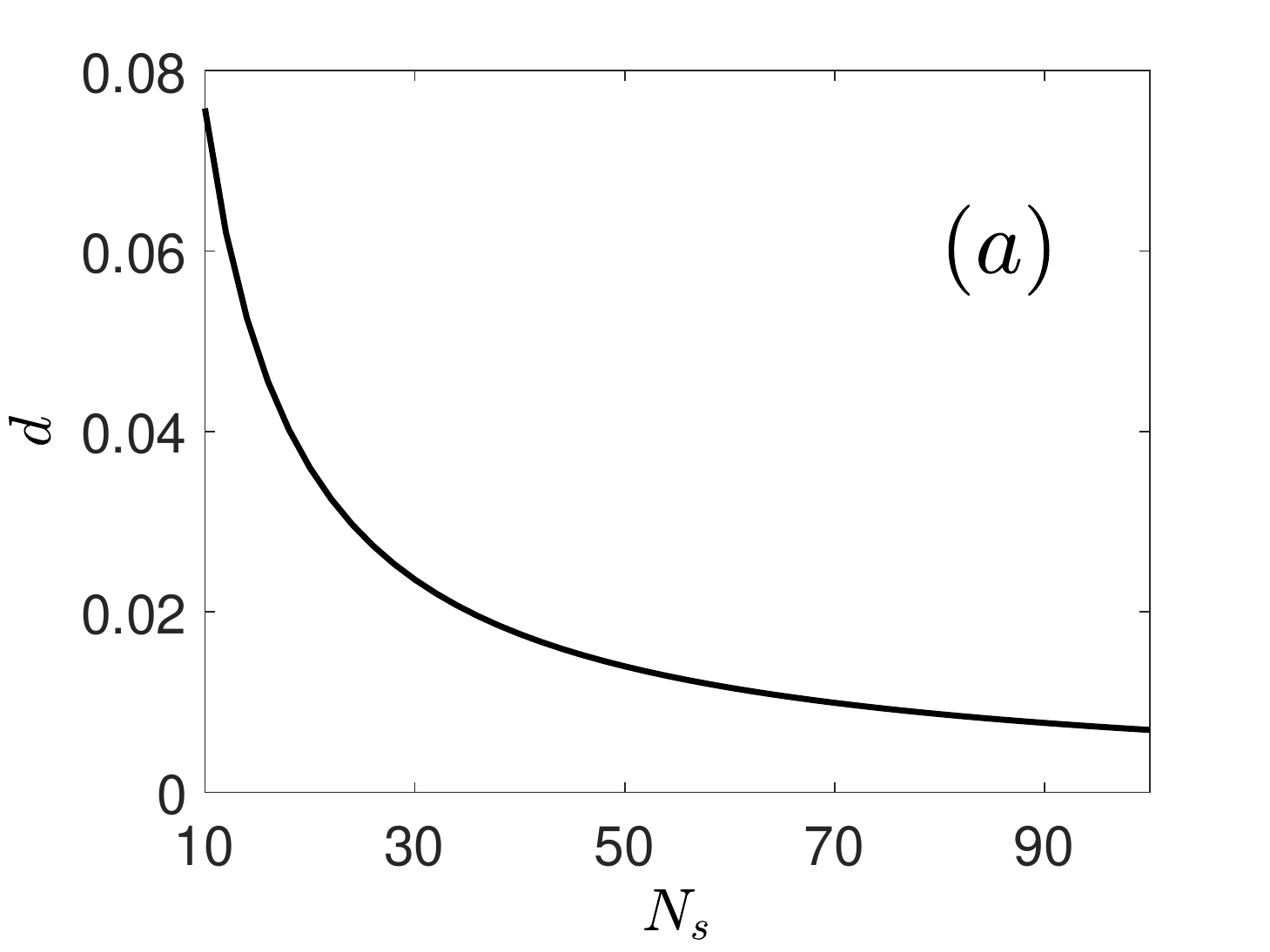}
\includegraphics[scale=0.5]{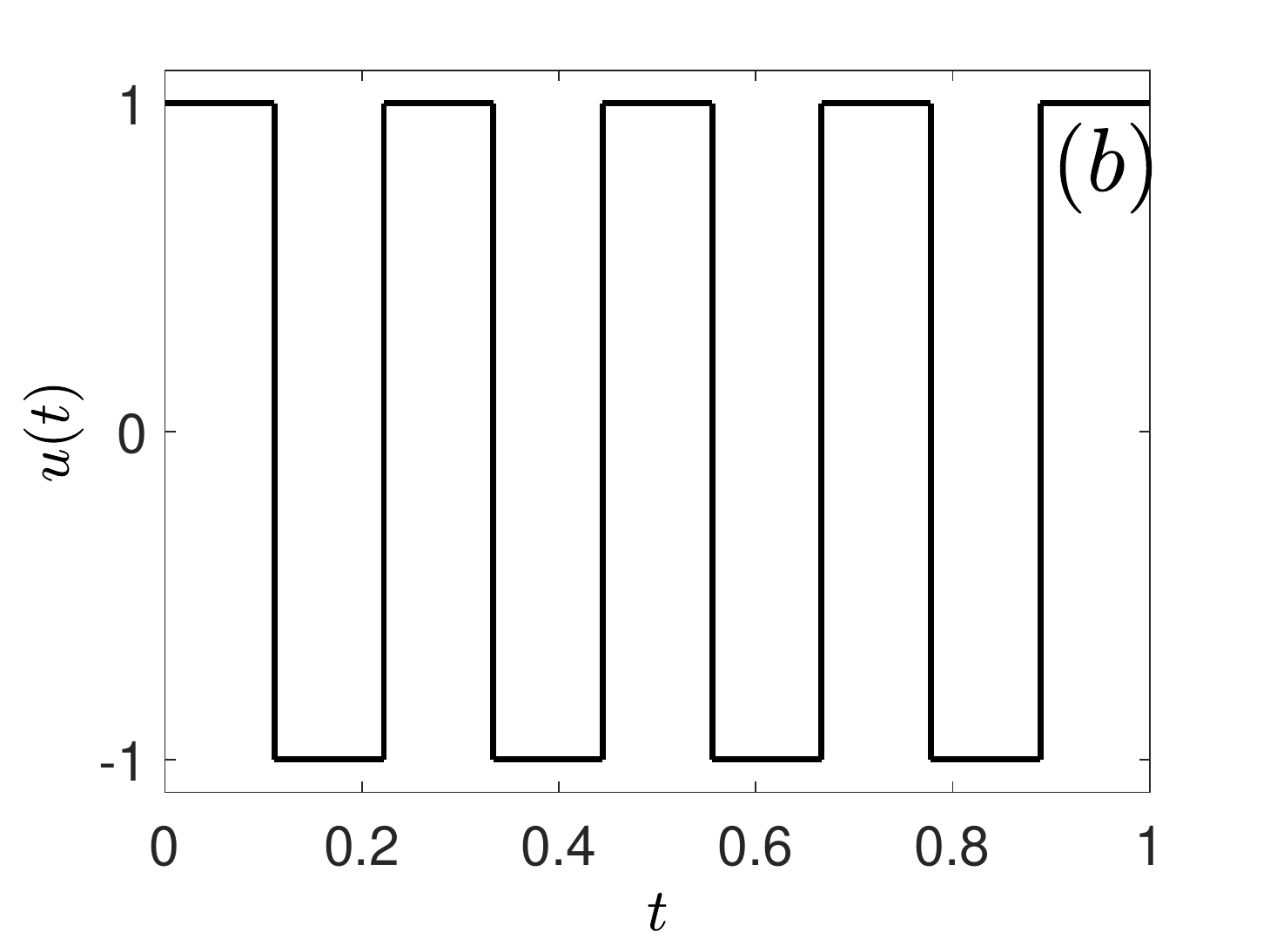}
\end{center}
\caption{Illustration of the existence problem. Panels (a) and (b) depict respectively the Euclidian distance $d$ to the target as a function of the number $N_s$ of switchings and an example of control law $u$ with $N_s=8$.\label{figswitch}}
\end{figure}
\end{example}

The idea of reducing the problem of existence of an optimal control to the compactness
of the reachable set of the augmented system can be used for more general problems.
For instance, if we add a terminal cost $\phi(\qq(T))$ to the cost $\int_0^T f^0(\qq(t),u(t))\,dt$, where $\phi$ is a smooth function (as for instance in Approach $B$, or in the general formulation given in Sec.~\ref{ss-PMP}), we get a similar result
adding to $f^0(\qq(t),u(t))$ the directional derivative of $\phi$ along $f(\qq(t),u(t))$, that is,
replacing $f^0(\qq(t),u(t))$ by $f^0(\qq(t),u(t))+\langle d\phi(\qq(t)),f(\qq(t),u(t))\rangle$.
We recall that here $d\phi(\qq(t))$ denotes the differential of
$\phi$ evaluated at $\qq(t)$ and that  $\langle\cdot,\cdot\rangle$
is the duality product between covectors of $T^{\ast}M$
and vectors of $TM$.

When the final time is free, it is more difficult to get the existence of optimal trajectories. However, the compactness of  ${\cal R}^{\leq T}(\qqin)$ in the Filippov theorem can be used to find conditions for the existence of optimal controls in minimum time.
 We state this result in the case where $M$ is compact. 
Note that  the problem of minimizing time can be written in the form of problem {\bf P1}
with $T$ free and $f^0=1$.
\bp\label{p-tempomin}
Consider problem {\bf P1} with $T$ free, $f^0=1$, and $M$ compact. Assume that
\begin{itemize}
\item  $\target$ is closed and ${\cal R}(\qqin)\cap\target\neq\emptyset$,
\item the set $U$ is compact,
\item  the set $ {\bf F}(\qq)=\{   f(q,u) \mid u\in U \}$ is convex for every $q\in M$.
\end{itemize}
Then there exists a solution to the problem.
\ep

A straightforward application of Propositions~\ref{p-teschio} and \ref{p-tempomin} to closed quantum systems yields Proposition~\ref{p-teschio-quantico} of the main text.

\subsection*{Acknowledgments}
The authors have contributed equally to this work. This research has been partially supported by the ANR projects ``SRGI'' ANR-15-CE40-0018 and ``QUACO'' ANR-17-CE40-0007-01.  This project has received funding from the European Union Horizon 2020 research and innovation program under Marie-Sklodowska-Curie Grant No. 765267 (QUSCO). We  gratefully  acknowledge  useful  discussions  with Pr.~Christiane Koch, who suggested us to write this tutorial and provided advice throughout the project. We thank the PhD students J. Fischer, L. Kerber, M. Krauss, J. Langbehn, A. Marshall, and F. Roy for their helpful comments.

\bibliographystyle{apsrev}

\end{document}